\documentclass[letterpaper, 10 pt, conference]{ieeeconf}  
\IEEEoverridecommandlockouts                              
\overrideIEEEmargins
\usepackage{graphics} 
\usepackage{graphicx,color}
\usepackage[active]{srcltx}
\usepackage{amsmath} 
\usepackage{amssymb}  
\usepackage{amsmath,amssymb,amsfonts,theorem}
\usepackage{enumerate}
\usepackage{tikz}
\usepackage{tabu}
\usepackage{algpseudocode}

\usepackage{epstopdf}

{ \theorembodyfont{\normalfont} 





\newcommand{\R}{\mathbb{R}}

\newcommand{\eps}{\varepsilon}

\newcommand{\diag}{\operatorname{diag}} 


%
%
\newcommand\ie{\emph{i.e.}}

{
\newtheorem{example}{Example}
\newtheorem{remark}{Remark}
}
\newtheorem{assumption}{Assumption}
\newtheorem{definition}{Definition}

\newtheorem{lemma}{Lemma}

\newtheorem{proposition}{Proposition}


\def\be{\begin{equation}}
\def\ee{\end{equation}}
\def\ba{\begin{array}}
\def\ea{\end{array}}
\def\eqa{\begin{eqnarray}}
\def\eqe{\end{eqnarray}}
\title{\LARGE \bf Synchronization of Kuramoto oscillators in a\\ bidirectional frequency-dependent tree network}
\author{Matin Jafarian, Xinlei Yi, Mohammad Pirani, Henrik Sandberg, Karl Henrik Johansson 
\thanks{This work was supported by the Knut and Alice Wallenberg Foundation, the Swedish Strategic Research Foundation and the Swedish Research Council. The authors are with the Automatic Control Department, School of Electrical Engineering and Computer Science, KTH Royal Institute of Technology, Stockholm, Sweden. Email: {{matinj@kth.se, xinleiy@kth.se, pirani@kth.se, hsan@kth.se, kallej@kth.se}.}}}
\begin{document}
\maketitle
\thispagestyle{empty}
\pagestyle{empty}
\begin{abstract}
This paper studies the synchronization of a finite number of Kuramoto oscillators in a frequency-dependent bidirectional tree network. We assume that the coupling strength of each link in each direction is equal to the product of a common coefficient and the exogenous frequency of its corresponding head$^{1}$ oscillator. We derive a sufficient condition for the common coupling strength in order to guarantee frequency synchronization in tree networks. Moreover, we discuss the dependency of the obtained bound on both the graph structure and the way that exogenous frequencies are distributed. Further, we present an application of the obtained result by means of an event-triggered algorithm for achieving frequency synchronization in a star network assuming that the common coupling coefficient is given.
\end{abstract}
\section{Introduction}\label{sec:int}
Oscillation is the fundamental function behind the operation of many complex networks, including biological and neural networks \cite{sacre2014sensitivity,steur2009semi}. The well-celebrated Kuramoto oscillator \cite{kuramoto2012chemical} has been a paradigm for studying interconnected oscillators. Kuramoto oscillator has been originally designed to study the synchronization of coupled oscillators in chemical networks, and it has been widely used in other disciplines including synchronization in brain networks \cite{cumin2007generalising}.

From a technical point of view, synchronization of Kuramoto oscillators have been widely studied in the literature. The main results have focused on the original model of Kuramoto in a complete graph \cite{strogatz2000kuramoto} where bounds on the critical coupling are derived \cite{chopra2009exponential}, \cite{dorfler2011critical} to provide sufficient and necessary conditions for frequency synchronization. Other relevant problems have also been studied, for example synchronization of oscillators over general connected graphs \cite{jadbabaie2004stability}, synchronization with time-varying exogenous frequencies \cite{franci2010phase}, and cluster synchronization of Kuramoto oscillators in a connected, weighted and undirected graphs \cite{favaretto2017cluster}.\\[1mm]
{\it Main contributions:} This paper considers frequency synchronization of Kuramoto oscillators in a bidirectional frequency-dependent tree network. We are motivated by the interest behind studying synchronization between different areas of a complex brain-like network. Our choice of studying tree networks is encouraged by observations that large-scale inter-areal connectivity in the brain can be approximated as a tree network \cite{stam2014trees}. The idea of frequency-dependent coupling stems from the evidence of frequency-dependent synaptic coupling between neurons e.g. \cite{markram1998information}. In the presentation of the current paper, we are confined within the mathematical framework and graph theory to represent our model without a direct usage of terminologies from the domain of neuroscience.  

We consider a tree graph and assume that each link (edge) of the graph is bidirectional and its weight in each direction depends on a common coupling term as well as the exogenous frequency of the oscillator at its head$^{1}$ node \cite{xu2016synchronization}. While the oscillators' exogenous frequencies are different from each other, there is a common coupling coefficient which affects all links equally. In other words, each oscillator is connected to its neighbors with $\kappa \omega_i$, where $\kappa$ denotes the common stiffness and $\omega_i$ varies for each oscillator. We derive a sufficient condition on the bound of $\kappa$ such that the network achieves frequency synchronization. We show and discuss the dependency of the obtained bound on the exogenous frequencies as well as graph structure. 

Compared with \cite{xu2016synchronization} where star graphs with identical leaf frequencies are considered, we consider a general class of tree graphs where nodal exogenous frequencies are different from each other and also use different analytical tools from control theory.
Compared to the previous work, e.g. \cite{chopra2009exponential,jadbabaie2004stability,dorfler2011critical,favaretto2017cluster,franci2010phase}, we are considering a frequency-dependent dynamics for Kuramoto oscillators which has not been considered before. Moreover, we study synchronization in a tree graph (a non-complete graph) and derive a condition on the coupling bound which depends on the exogenous frequencies and the graph structure.

In addition to studying frequency synchronization, we present an event-based algorithm for synchronization in star networks, which is a special case of a tree network, assuming a specified $\kappa$ which may not necesarily satisfy the sufficient condition for synchronization. Compared with \cite{proskurnikov2017synchronization}, we consider a different underlying dynamics for the oscillatory network and design a different algorithm.\\[2mm]
This paper is organized as follows. Section \ref{sec:pre} presents preliminaries and problem formulation. Section \ref{sec:coup} gives a sufficient condition for the common coupling strength in order to achieve frequency synchronization. An event-triggered algorithm for synchronization in a star network is presented and analyzed in Section \ref{sec:event}. Section \ref{sec:sim} presents simulation results and Section \ref{sec:con} concludes the paper \footnote{The word {\em source} has been mistakenly used instead of {\em head} in the paper published in the proceedings of 57th IEEE Conference on Decision and Control, 2018.}\\[3mm]
\section{Preliminaries and Problem formulation}\label{sec:pre}
For a connected undirected graph $G(\mathcal V,\mathcal E)$, the node-set $\mathcal V$ corresponds to $n$ nodes and the edge-set $\mathcal E \subset \mathcal V \times \mathcal V$ corresponds to $m$ edges. The incidence matrix $B_{n \times m }$ associated to $G(\mathcal V,\mathcal E)$ describes which nodes are coupled by an edge. Each element of $B$ is defined as follows
\begin{equation*}
b_{i \ell} = 
\begin{cases}
+1 & \text{if node $i$ is at the positive end of edge $\ell$} \\
-1 & \text{if node $i$ is at the negative end of edge $\ell$}\\
0 & \text{otherwise},
\end{cases}
\end{equation*}
where the labeling of the nodes can be done in an arbitrary fashion. The matrix $L= B B^T$ is called the graph Laplacian and $L_g=B^T B$ is the edge-Laplacian. If the underlying graph is connected, the eigenvalues of the Laplacian matrix are ordered as $0=\lambda_1(L) < \lambda_2(L)\leq ... \leq \lambda_n(L)$, where $\lambda_2(L)$ is called the {\it algebraic connectivity} of the network. If the undirected graph is a tree, all eigenvalues of $L_g$ (which are $n-1$ eigenvalues) are equal to the nonzero eigenvalues of $L$ \cite{mesbahi2010graph}. In this paper, we consider  trees which are a  subclass of connected graphs without cycles, i.e., any two nodes are connected by exactly one unique path. The edge Laplacian of a tree graph is invertible \cite{mesbahi2010graph}. 
\begin{definition}
A subset $S \subset \R^n$ is said to be forward invariant with respect to the differential equation $\dot x=f(x)$ provided that each solution $x(\cdot)$ with $x(0) \in S$ has the property that $x(t) \in S$ for all positive $t$ in the domain of definition of $x(\cdot)$ \cite{sontag2001structure}.
\end{definition}
{\bf{Notation}}\\
Symbol $\mathbf{1}_n$ is a $n$-dimensional vector and $\mathbf{1}_{m \times m}$ represents a $m \times m$ matrix whose elements are all equal to $1$. The notation $x_{i,j}$ is equivalently used for $x_i-x_j$. The notation $\theta_{i,k}$ indicates that the node $i$ of graph $G$ is connected to the edge $k$. The minimum eigenvalue of the positive definite matrix $M$ is denoted by $\lambda_1(M)$.  
\subsection{Problem formulation}\label{sec:pf}
Consider $n$ oscillators communicating over a connected and bidirectional graph. The original Kuramoto model follows
\begin{equation}\label{eq:kh}
\dot{\theta_i}= \omega_i - \kappa \sum_{j \in {\cal N}_i} \sin(\theta_{i}-\theta_{j}),
\end{equation}
where $\theta_i \in \R$, $\omega_i>0$ are the phases and exogenous frequencies of oscillator $i$, and ${\cal N}_i$ denotes the set of neighboring nodes of node $i$. The parameter $\kappa > 0, \kappa \in \R$ is the constant coefficient of the coupling strength of all links of the graph. 
We now continue with a different model where the dynamics of each node follows \cite{xu2016synchronization} 
\begin{equation}\label{eq:m1h}
\dot{\theta_i}= \omega_i - \kappa \omega_i \sum_{j \in {\cal N}_i} \sin(\theta_i-\theta_j). 
\end{equation}
This model can be interpreted as a bidirectional communication where the weights of coupling of each edge at each direction depends on the frequency of the head node. This makes the interaction topology a directed and weighted graph. The model of the network in compact form is
\begin{equation}\label{eq:cl}
\dot{\boldsymbol \theta}= \boldsymbol \omega (\mathbf{1}_n - \kappa B \sin(B^T \boldsymbol \theta)),
\end{equation}
where $\boldsymbol \omega \triangleq \diag(\omega_1, \omega_2, ..., \omega_n)$ is a diagonal matrix such that $\omega_i >0$ represents the exogenous frequency of node $i$, $\boldsymbol\theta \triangleq [\theta_1, \theta_2, ..., \theta_n]^T$, and $\sin$ function acts element-wise. 
As presented in \cite{dorfler2011critical}, the notions of synchronization include
\begin{itemize}
    \item phase cohesiveness, \ie,\  $|\theta_i-\theta_j| \leq \eta$,
    \item phase-synchronization, \ie,\  $\theta_i=\theta_j, \forall i,j$,
    \item frequency synchronization, \ie,\  ${\dot\theta}_i={\dot\theta}_j, \forall i,j$.
\end{itemize}
In this paper we first characterize the sufficient condition for the coupling strength such that for $\kappa > \Delta$, the phase cohesiveness and frequency synchronization occur provided that the initial conditions of the oscillators are within a prescribed bound. Second, we present an event-triggered mechanism to achieve frequency synchronization in a star network. 
\section{The coupling strength}\label{sec:coup}
This section studies a sufficient condition for the coupling strength $\kappa$ to guarantee phase-cohesiveness and frequency synchronization for the network presented by \eqref{eq:cl}. 
\begin{assumption}\label{ass0}
The communication topology for the network with node dynamics in \eqref{eq:m1h} is a tree.
\end{assumption}
\begin{assumption}\label{ass1}
The initial relative phase $\theta_{i}(0)-\theta_{j}(0) \in [-\eta,\eta], j \in {\cal N}_i$, where $\eta>0, \eta=\frac{\pi}{2}-\eps$ for some $\eps>0$.
\end{assumption}
\begin{assumption}\label{ass2}
All exogenous frequencies $\omega_i$ are strictly positive, i.e.,\ $\omega_i \geq \zeta>0$.
\end{assumption}
Since we are interested in frequency synchronization, \ie\ ${\dot\theta}_i={\dot\theta}_j$, and $|{\theta}_i-{\theta}_j| \leq \eta$, let us first write the compact relative phase dynamics, $B^T \dot{\boldsymbol \theta}$, by multiplying \eqref{eq:cl} with $B^T$ as follows
\begin{equation}\label{eq:rcl}
B^T \dot{\boldsymbol \theta}= B^T \boldsymbol \omega \mathbf{1}_n - \kappa\ B^T \boldsymbol \omega B \sin(B^T \boldsymbol \theta).
\end{equation}
\begin{lemma}\label{lem1}
Under Assumptions \ref{ass0} and \ref{ass2}, the matrix $B^T \boldsymbol \omega B$ is positive definite and its smallest eigenvalue is positive, i.e., $\lambda_{1}(B^T \boldsymbol \omega B) > 0$.
\end{lemma}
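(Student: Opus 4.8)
The plan is to recognize $B^T\boldsymbol\omega B$ as a congruence transform that is essentially the edge Laplacian reweighted by the (positive) frequencies, and then to combine two facts: $\boldsymbol\omega$ is positive definite under Assumption~\ref{ass2}, and the incidence matrix of a tree has full column rank. Concretely, since $\boldsymbol\omega=\diag(\omega_1,\dots,\omega_n)$ has all entries strictly positive by Assumption~\ref{ass2}, its symmetric square root $\boldsymbol\omega^{1/2}\triangleq\diag(\sqrt{\omega_1},\dots,\sqrt{\omega_n})$ is well defined and invertible, and
\begin{equation*}
B^T\boldsymbol\omega B=(\boldsymbol\omega^{1/2}B)^T(\boldsymbol\omega^{1/2}B),
\end{equation*}
so the matrix is symmetric, and for every $x\in\R^m$
\begin{equation*}
x^TB^T\boldsymbol\omega B\,x=\|\boldsymbol\omega^{1/2}Bx\|^2=\sum_{i=1}^n\omega_i\,(Bx)_i^2\geq0 .
\end{equation*}

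The second step is to rule out a nontrivial kernel, upgrading positive semidefiniteness to positive definiteness. If $x^TB^T\boldsymbol\omega B\,x=0$ then, because each $\omega_i>0$, every component $(Bx)_i$ must vanish, i.e.\ $Bx=0$. Under Assumption~\ref{ass0} the graph is a tree, hence $m=n-1$ and the edge Laplacian $L_g=B^TB$ is invertible, as recalled in Section~\ref{sec:pre}; invertibility of $B^TB$ is equivalent to $B$ having full column rank, so $Bx=0$ forces $x=0$. Therefore $x^TB^T\boldsymbol\omega B\,x>0$ for all $x\neq0$, which is exactly positive definiteness, and in particular $\lambda_1(B^T\boldsymbol\omega B)>0$. (Equivalently, since $\boldsymbol\omega^{1/2}$ is invertible, $\operatorname{rank}(\boldsymbol\omega^{1/2}B)=\operatorname{rank}(B)=m$, so $\boldsymbol\omega^{1/2}B$ has full column rank and its Gram matrix is positive definite.)

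I expect no genuine obstacle in this argument; the only point that needs care is the claim that $B$ has full column rank for a tree, which is precisely the invertibility of the edge Laplacian cited from \cite{mesbahi2010graph}. Should a self-contained justification be preferred, one can root the tree at an arbitrary node, use the bijection sending each non-root node $v$ to the unique edge joining $v$ to its parent, and order both nodes and edges by depth; the $(n-1)\times(n-1)$ submatrix of $B$ obtained by deleting the root row is then triangular with $\pm1$ on its diagonal, hence nonsingular, giving $\operatorname{rank}(B)=n-1=m$.
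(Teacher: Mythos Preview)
Your proof is correct and follows essentially the same route as the paper: both show $x^TB^T\boldsymbol\omega Bx\geq 0$ using positivity of the frequencies, and both rule out a nontrivial kernel by invoking invertibility of the edge Laplacian $B^TB$ for a tree to conclude $Bx=0\Rightarrow x=0$. The only cosmetic difference is that you factor through the square root $\boldsymbol\omega^{1/2}$ to write a Gram matrix, whereas the paper bounds the quadratic form directly via $y^T\boldsymbol\omega y\geq\omega_{\min}y^Ty$ with $y=Bx$; your additional self-contained rank argument for $B$ is a nice bonus but not needed given the citation already in Section~\ref{sec:pre}.
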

\begin{proof}
Consider $x^T B^T \boldsymbol \omega B x$, $x \in \R^n$ and define $y= B x$. We have, 
$$x^T B^T \boldsymbol \omega B x= y^T \boldsymbol \omega y.$$
Since $\boldsymbol \omega >0$, $y^T \omega y \geq \omega_{\min} y^T y$. Thus,
$$x^T B^T \boldsymbol \omega B x \geq \omega_{\min} y^T y \geq 0.$$
Now $y^T y=0$ if and only if $y=\mathbf{0}$.  If we show that $y=\mathbf{0}$ implies $x=\mathbf{0}$, then $B^T \boldsymbol \omega B >0$. We argue as follows. Assume that $y=B x=\mathbf{0}$. Then, $B^T B x=\mathbf{0}$. Since for a tree graph, $B^T B$ is invertible, we conclude that $x=\mathbf{0}$ which ends the proof. 
\end{proof}
Recall Assumption \ref{ass1} and for a given $\eps >0$ define 
\be\label{eq:s}
S=\{\boldsymbol \theta \in \R^n: |\theta_{i}-\theta_{j}| \leq \eta,  \eta=\frac{\pi}{2}-\eps, \forall (i,j) \in \mathcal E\}.
\ee
In addition, take $V(\boldsymbol\theta)= 2 \sin^T (\frac{B^T \boldsymbol \theta}{2}) \sin(\frac{B^T \boldsymbol \theta}{2})$ and define
\be\label{eq:s2}
S'=\{ {\boldsymbol \theta} \in \R^n: V(\boldsymbol \theta) \leq c(\eps)\}
\ee
such that $S' \subset S$ is the largest level set of $V(\boldsymbol\theta)$ that fits in $S$. Notice that $c(\eps)>0 \in \R$ is defined for the given $\eps$.     
\begin{proposition}\label{pr1}
For the relative angle dynamics in \eqref{eq:rcl} under Assumptions \ref{ass0}-\ref{ass2}, consider the set $S'$ in \eqref{eq:s2} and the set $S$ in \eqref{eq:s}. Then $S' \subset S$ is forward invariant for system \eqref{eq:rcl} provided that $\kappa \geq \frac{|\Delta^{\omega}_{\max}|}{\lambda_{1}(B^T \boldsymbol \omega B) \cos(\eps)}$, where $\Delta^{\omega}_{\max}=\max_{(i,j)\in \mathcal{E}}{|\omega_i - \omega_j|}$. 
\end{proposition}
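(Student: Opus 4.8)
The plan is to exploit that $S'=\{\boldsymbol\theta:V(\boldsymbol\theta)\le c(\eps)\}$ is a sublevel set of the smooth function $V$ and to apply a Nagumo--type (tangent cone) criterion: since the boundary of $S'$ is the level set $\{V(\boldsymbol\theta)=c(\eps)\}$ and $\nabla_{\boldsymbol\theta}V=B\sin(B^T\boldsymbol\theta)$ does not vanish there, $S'$ is forward invariant for \eqref{eq:rcl} as soon as $\dot V(\boldsymbol\theta)\le 0$ at every $\boldsymbol\theta$ with $V(\boldsymbol\theta)=c(\eps)$. Thus the entire proof reduces to establishing this one inequality on the outer shell of $S'$ under the stated lower bound on $\kappa$.

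First I would differentiate $V$ along \eqref{eq:cl}. With $V(\boldsymbol\theta)=2\sin^{T}(\tfrac{B^T\boldsymbol\theta}{2})\sin(\tfrac{B^T\boldsymbol\theta}{2})=\sum_{(i,j)\in\mathcal E}\bigl(1-\cos(\theta_i-\theta_j)\bigr)$ and $\nabla_{\boldsymbol\theta}V=B\sin(B^T\boldsymbol\theta)$, the chain rule together with \eqref{eq:cl} gives
\[
\dot V=\sin(B^T\boldsymbol\theta)^{T}B^T\boldsymbol\omega\,\mathbf 1_n-\kappa\,\sin(B^T\boldsymbol\theta)^{T}B^T\boldsymbol\omega B\sin(B^T\boldsymbol\theta).
\]
The first term is an indefinite ``drift'' term: its $\ell$-th component is $(\omega_i-\omega_j)\sin(\theta_i-\theta_j)$, because the entries of $B^T\boldsymbol\omega\mathbf 1_n$ are exactly the edge frequency differences. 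The second term is negative definite by Lemma \ref{lem1} and provides the stabilizing action.

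Next I would bound the two terms on the outer shell $\{V(\boldsymbol\theta)=c(\eps)\}$. For the drift term, Cauchy--Schwarz together with $|\omega_i-\omega_j|\le\Delta^{\omega}_{\max}$ on every edge gives $\sin(B^T\boldsymbol\theta)^{T}B^T\boldsymbol\omega\mathbf 1_n\le\Delta^{\omega}_{\max}\,\|\sin(B^T\boldsymbol\theta)\|$; for the stabilizing term, Lemma \ref{lem1} gives $\sin(B^T\boldsymbol\theta)^{T}B^T\boldsymbol\omega B\sin(B^T\boldsymbol\theta)\ge\lambda_1(B^T\boldsymbol\omega B)\,\|\sin(B^T\boldsymbol\theta)\|^2$. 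Hence $\dot V\le\|\sin(B^T\boldsymbol\theta)\|\bigl(\Delta^{\omega}_{\max}-\kappa\,\lambda_1(B^T\boldsymbol\omega B)\,\|\sin(B^T\boldsymbol\theta)\|\bigr)$, and it only remains to bound $\|\sin(B^T\boldsymbol\theta)\|$ from below on that shell.

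This last step is where the factor $\cos\eps$ and the actual content of the statement appear. From the identity $\sin^2(\theta_i-\theta_j)=(1-\cos(\theta_i-\theta_j))(1+\cos(\theta_i-\theta_j))$ and the fact that on $S'\subset S$ one has $|\theta_i-\theta_j|\le\eta=\tfrac\pi2-\eps$, hence $\cos(\theta_i-\theta_j)\ge\cos\eta=\sin\eps$, I obtain $\|\sin(B^T\boldsymbol\theta)\|^2\ge(1+\sin\eps)\sum_{(i,j)\in\mathcal E}(1-\cos(\theta_i-\theta_j))=(1+\sin\eps)V(\boldsymbol\theta)$. On the shell $V(\boldsymbol\theta)=c(\eps)$, and since $S'$ is the largest sublevel set of $V$ contained in $S$ one has $c(\eps)=1-\cos\eta=1-\sin\eps$; therefore $\|\sin(B^T\boldsymbol\theta)\|^2\ge(1+\sin\eps)(1-\sin\eps)=\cos^2\eps$, i.e. $\|\sin(B^T\boldsymbol\theta)\|\ge\cos\eps$. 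Substituting, $\dot V\le\|\sin(B^T\boldsymbol\theta)\|\bigl(\Delta^{\omega}_{\max}-\kappa\,\lambda_1(B^T\boldsymbol\omega B)\cos\eps\bigr)\le 0$ whenever $\kappa\ge\Delta^{\omega}_{\max}/(\lambda_1(B^T\boldsymbol\omega B)\cos\eps)$, and forward invariance of $S'$ follows from the Nagumo criterion. I expect the main obstacle to be exactly this balancing: the drift term is only linear in $\sin(B^T\boldsymbol\theta)$ while the dissipation is quadratic, so near the synchronized phases dissipation does not dominate; the argument must therefore be localized to the outer shell, where the trigonometric lower bound $\|\sin(B^T\boldsymbol\theta)\|\ge\cos\eps$ is what makes the quadratic term win. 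Secondary points to handle carefully are the precise value of $c(\eps)$, the non-vanishing of $\nabla V$ on the boundary of $S'$ needed for the Nagumo reduction, and controlling $\|B^T\boldsymbol\omega\mathbf 1_n\|$ in terms of $\Delta^{\omega}_{\max}$.
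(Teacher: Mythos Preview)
Your overall architecture coincides with the paper's: the same Lyapunov function $V=2\sin^T(\tfrac{B^T\boldsymbol\theta}{2})\sin(\tfrac{B^T\boldsymbol\theta}{2})$, the same expression $\dot V=y^TB^T\boldsymbol\omega\mathbf 1_n-\kappa\,y^TB^T\boldsymbol\omega B\,y$ with $y=\sin(B^T\boldsymbol\theta)$, and the same use of $\lambda_1(B^T\boldsymbol\omega B)$ to control the quadratic term. The paper, however, bounds the drift componentwise to obtain $\dot V\le\Delta^\omega_{\max}\,|y|^T\mathbf 1_m-\kappa\lambda_1\,y^Ty$ and then splits this sum \emph{edge by edge}, completing the square in each $|y_k|$ separately; it concludes by arguing that each edge contribution is nonpositive once $|y_k|$ is near $\sin\eta=\cos\eps$.

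Your route through the $2$-norm has a concrete gap at the drift estimate. The step ``Cauchy--Schwarz together with $|\omega_i-\omega_j|\le\Delta^\omega_{\max}$ gives $y^TB^T\boldsymbol\omega\mathbf 1_n\le\Delta^\omega_{\max}\|y\|$'' is false for more than one edge: Cauchy--Schwarz yields only $y^TB^T\boldsymbol\omega\mathbf 1_n\le\|B^T\boldsymbol\omega\mathbf 1_n\|\,\|y\|\le\sqrt{m}\,\Delta^\omega_{\max}\|y\|$, while the elementwise bound gives $\le\Delta^\omega_{\max}\|y\|_1$; neither collapses to $\Delta^\omega_{\max}\|y\|_2$ (take all edge frequency gaps equal to $\Delta^\omega_{\max}$ and all $y_k$ equal to see the $\sqrt m$ loss). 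With the corrected Cauchy--Schwarz bound your factorization only delivers $\kappa\ge\sqrt m\,\Delta^\omega_{\max}/(\lambda_1\cos\eps)$, a factor $\sqrt m$ short of the claim. Your shell computation $\|y\|_2\ge\cos\eps$ on $\{V=c(\eps)\}$ and your identification $c(\eps)=1-\sin\eps$ are both correct and sharper than anything the paper makes explicit, but they cannot rescue the $2$-norm factorization: the linear drift is genuinely controlled by $\|y\|_1$, and on the shell $\|y\|_1$ can be as large as $\sqrt m\,\|y\|_2$. To match the paper's route you would need to keep the drift in the $1$-norm and argue per edge as the paper does.
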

\begin{proof}
Consider $V= 2 \sin^T (\frac{B^T \boldsymbol \theta}{2}) \sin(\frac{B^T \boldsymbol \theta}{2})$ as the Lyapunov function and define $S'$ as in \eqref{eq:s2}. Write $V$ in the element-wise form, $V= 2 \sum_{i} \sum_{j} \sin^2(\frac{{\theta}_{i}-\theta_{j}}{2})$, and calculate $\dot V$. We obtain
\be\begin{aligned}\label{eq:p21}
\dot{V}&= \sum_{i}\sum_{j} 2 \sin(\frac{{\theta}_{i}-{\theta}_{j}}{2}) \cos(\frac{{\theta}_{i}-{\theta}_{j}}{2}) ({\dot\theta}_{i}-{\dot\theta}_{j})\\
&= \sum_{i}\sum_{j} \sin({\theta}_{i}-\theta_{j}) ({\dot\theta}_{i}-{\dot\theta}_{j})\\
&= \sin^T(B^{T} \boldsymbol \theta)\  B^T \dot{\boldsymbol \theta}.
\end{aligned}\ee
Replacing \eqref{eq:rcl} in the above gives 
\be\begin{aligned}\label{eq:p22}
\dot{V}&=\sin^T(B^T \boldsymbol \theta) B^T (\boldsymbol \omega \bold{1}_n - \kappa \boldsymbol \omega B \sin(B^T \boldsymbol \theta))\\
&= \sin^T(B^T \boldsymbol \theta) B^T \boldsymbol \omega \bold{1}_n - \kappa \sin^T(B^T \boldsymbol \theta) B^T \boldsymbol \omega B \sin(B^T \boldsymbol \theta).
\end{aligned}\ee
Notice that $B^T \boldsymbol \omega \bold{1}_n \leq \Delta^{\omega}_{\max} \bold{1}_m$ where $\Delta^{\omega}_{\max}$ denotes the maximum of $|\omega_i-\omega_j|$ and $m=n-1$ is the number of edges of a tree graph. Define $y=\sin(B^T \boldsymbol \theta)$. Hence, $$-|\Delta^{\omega}_{\max} y^T| \bold{1}_m \leq y^T B^T \boldsymbol \omega \bold{1}_n \leq |\Delta^{\omega}_{\max} y^T| \bold{1}_m ,$$
where $|\Delta^{\omega}_{\max} y^T|= |\Delta^{\omega}_{\max}| | y^T|$ and $| y^T|$ is the element-wise absolute value. Also, from Lemma \ref{lem1}, we have that $\lambda_{1}(B^T \boldsymbol \omega B) > 0$. Applying the above simplifications in \eqref{eq:p22}, we obtain
\be\begin{aligned}\label{eq:p23}
\dot{V}&\leq |\Delta^{\omega}_{\max} y^T| \bold{1}_m - \kappa \lambda_{1}(B^T \boldsymbol \omega B) y^T y\\ 
&\leq -(\alpha |y|- \beta \bold{1}_m)^T (\alpha |y|- \beta \bold{1}_m) + m \beta^2 \\
&\leq \sum_{k=1}^{m} -((\alpha |y_k| -\beta)^2 - \beta^2),
\end{aligned}\ee  
where $\alpha=\sqrt{\kappa \lambda_{1}(B^T \boldsymbol \omega B)}$, $\beta=\frac{\Delta^{\omega}_{\max}}{2\sqrt{\kappa \lambda_{1}(B^T \boldsymbol \omega B)}}$, and $|y_k|=|\sin (\theta_{i,k}-\theta_{j,k})|$ with $k$ denoting the edge connecting two nodes $i$ and $j$. Define ${\dot V}_k = -((\alpha |y_k| -\beta)^2 - \beta^2)$. Based on \eqref{eq:p23}, we have ${\dot V}_k \leq 0$ holds if $|y_k|=0$ or
\be\label{eq:k}
|y_k|=|\sin (\theta_{i,k}-\theta_{j,k})| > \frac{2 \beta}{\alpha}, \beta>0, \alpha>0,
\ee
that is $\kappa \geq \frac{|\Delta^{\omega}_{\max}|}{\lambda_{1}(B^T \boldsymbol \omega B) |\sin(\theta_{i,k}-\theta_{j,k})|}$.\\
\begin{figure}[h]
\centering
\includegraphics[scale=0.65]{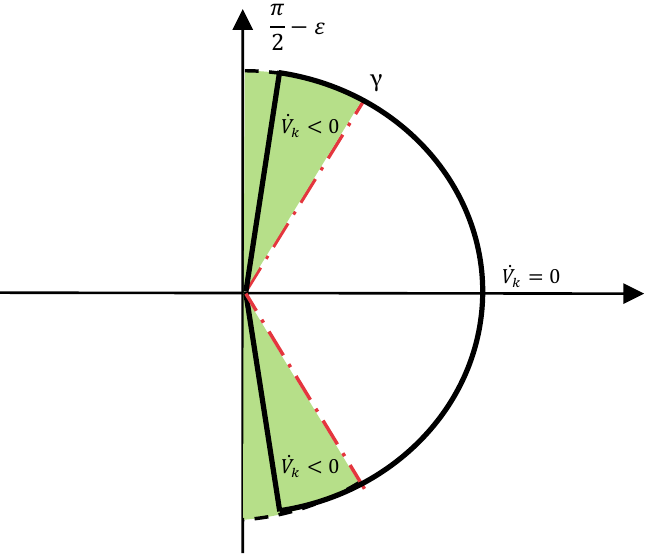}
\caption{The sign of ${\dot V}_k$ for $\theta_{i,j} \in [-\frac{\pi}{2}+\eps,\frac{\pi}{2}-\eps]$.}\label{fig:ly}
\end{figure}
Since $|\sin(\theta_{i,k}-\theta_{j,k})|$ can be very small then there is a region where ${\dot V}_k >0$ (see Figure \ref{fig:ly}). Hence, if $\kappa$ is larger than 
$\kappa \geq \frac{|\Delta^{\omega}_{\max}|}{\lambda_{1}(B^T \boldsymbol \omega B) |\sin \gamma|}$, ${\dot V}_k$ will be negative on the set $\gamma <|\theta_{i,j}|<\frac{\pi}{2}$. 
Now, if we take $\gamma$ close enough to $\frac{\pi}{2}-\eps$, the sufficient condition for $S'$ to be forward invariant is that
\be\label{eq:keps}
\kappa \geq \frac{|\Delta^{\omega}_{\max}|}{\lambda_{1}(B^T \boldsymbol \omega B) \cos(\eps)},
\ee 
which ends the proof.\\[2mm]
\end{proof}
For a tree graph, the symmetric matrix $A=B^T\boldsymbol\omega B$, which is in the form of a weighted edge Laplacian, has the following structure 
\be\label{eq:topo}
|A_{m \times m}|=
  \begin{bmatrix}
    |\omega_{i,1}+\omega_{j,1}| & |\omega_\ell^{1,2}| & \ldots & |\omega_\ell^{1,m}|\\
    \vdots & \vdots & \vdots & \vdots\\
    |\omega_\ell^{m,1}| & |\omega_\ell^{m,2}| & \ldots & |\omega_{i,m}+\omega_{j,m}|\\
  \end{bmatrix},
\ee
where $\omega_{i,k}$, $\omega_\ell^{k,p}$ denote the frequency of node $i$ connected to the link $k$ and the frequency of the shared node $\ell$ of two links $k$, and $p$, respectively. The following proposition provides bounds on $\lambda_{1}(B^T \boldsymbol \omega B)$ based on the network topology and exogenous frequencies which can be used in \eqref{eq:keps}.
\begin{proposition}\label{pr2}
The minimum eigenvalue of $B^T \boldsymbol \omega B$, $\lambda_{1}(B^T \boldsymbol \omega B)$, for a tree structure is lower bounded by 
\begin{align}\label{eq:threebounds}
\max\bigg\{\underbrace{\omega_{\min} \lambda_{2}(B B^T)}_{(i)}, \underbrace{\min_{k \in \mathcal{E}}\{(2-d_i)\omega_{i,k}+(2-d_j)\omega_{j,k}\}}_{(ii)}\bigg\}\nonumber\\
\leq \lambda_{1}(B^T \boldsymbol \omega B)\leq \min_{k \in \mathcal{E}} \{\omega_{i,k}+\omega_{j,k}\},
\end{align}
where $d_i$ is the degree of node $i$ of the underlying graph.
\end{proposition}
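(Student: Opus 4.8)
The plan is to work directly with the entries of the symmetric matrix $A=B^T\boldsymbol\omega B$ displayed in \eqref{eq:topo} and to combine three elementary facts: the Rayleigh--Ritz characterization of the smallest eigenvalue, a uniform scaling estimate relating $A$ to the edge Laplacian $B^TB$, and the Gershgorin disc theorem. First I would pin down the structure of $A$. Writing the $k$-th column of $B$ as $e_i-e_j$ for the edge $k=(i,j)$ (with $e_i$ the standard basis vectors), one gets $A_{kk}=(e_i-e_j)^T\boldsymbol\omega(e_i-e_j)=\omega_{i,k}+\omega_{j,k}$, while for $p\neq k$ with $p=(a,b)$ we have $A_{kp}=(e_i-e_j)^T\boldsymbol\omega(e_a-e_b)$; since a tree is simple, edges $k$ and $p$ share at most one node, so $A_{kp}=0$ when they are disjoint and $A_{kp}=\pm\omega_\ell$ when they meet at a single node $\ell$. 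This reproduces \eqref{eq:topo}.

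For the upper bound, applying the Rayleigh quotient to the $k$-th standard basis vector gives $\lambda_1(B^T\boldsymbol\omega B)\le e_k^T A e_k=A_{kk}=\omega_{i,k}+\omega_{j,k}$ for every edge $k$, hence $\lambda_1(B^T\boldsymbol\omega B)\le\min_{k\in\mathcal E}\{\omega_{i,k}+\omega_{j,k}\}$. For lower bound (i), for any $x\in\R^{m}$ I would write $x^T B^T\boldsymbol\omega B x=(Bx)^T\boldsymbol\omega(Bx)\ge\omega_{\min}\|Bx\|^2=\omega_{\min}\,x^T(B^TB)x\ge\omega_{\min}\,\lambda_1(B^TB)\,\|x\|^2$; since for a tree the $n-1$ positive eigenvalues of the edge Laplacian $B^TB$ coincide with the nonzero eigenvalues of $L=BB^T$ (recalled in Section~\ref{sec:pre}), we have $\lambda_1(B^TB)=\lambda_2(BB^T)$, which yields $\lambda_1(B^T\boldsymbol\omega B)\ge\omega_{\min}\lambda_2(BB^T)$.

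For lower bound (ii), since $A$ is symmetric positive definite (Lemma~\ref{lem1}) its eigenvalues are real and positive, so the Gershgorin disc theorem gives $\lambda_1(B^T\boldsymbol\omega B)\ge\min_{k}\big(A_{kk}-\sum_{p\neq k}|A_{kp}|\big)$. Using the structure above, the edges $p\neq k$ with $A_{kp}\neq0$ are precisely the $d_i-1$ edges incident to node $i$ other than $k$, each contributing $\omega_{i,k}$, together with the $d_j-1$ edges incident to node $j$ other than $k$, each contributing $\omega_{j,k}$ (no edge contributes twice, since in a tree the edge joining two adjacent nodes is unique). Hence $\sum_{p\neq k}|A_{kp}|=(d_i-1)\omega_{i,k}+(d_j-1)\omega_{j,k}$, so $A_{kk}-\sum_{p\neq k}|A_{kp}|=(2-d_i)\omega_{i,k}+(2-d_j)\omega_{j,k}$, and taking the minimum over $k\in\mathcal E$ gives bound (ii). Combining the two lower bounds via the maximum completes the proof.

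All the computations are routine; the only place requiring care is the bookkeeping in the Gershgorin step, i.e.\ verifying that in a tree each edge adjacent to $k$ shares exactly one endpoint with $k$ so that the absolute row sum splits cleanly into the two groups of $d_i-1$ and $d_j-1$ terms, and invoking the tree eigenvalue identity to rewrite $\lambda_1(B^TB)$ as $\lambda_2(BB^T)$.
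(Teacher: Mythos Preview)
Your proposal is correct and follows essentially the same route as the paper: the Rayleigh quotient evaluated at a coordinate vector for the upper bound, the scaling $\boldsymbol\omega\ge\omega_{\min}I$ together with the tree identity $\lambda_1(B^TB)=\lambda_2(BB^T)$ for bound (i), and a Gershgorin-type row-sum estimate on $A=B^T\boldsymbol\omega B$ for bound (ii). Your Gershgorin bookkeeping is in fact more explicit than the paper's, and the off-diagonal row sum you obtain, $(d_i-1)\omega_{i,k}+(d_j-1)\omega_{j,k}$, is the one consistent with the stated bound.
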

\begin{proof}
We motivate the reasoning behind each of the three elements in \eqref{eq:threebounds}. Recall that based on Lemma \ref{lem1},  $x^T B^T \boldsymbol \omega B x \geq a x^T x$ where $a>0$.\\ 
i) We have $\omega_{\min} I_n \leq \boldsymbol\omega \leq \omega_{\max} I_n$. Thus, $B^T \boldsymbol \omega B> \omega_{\min} B^T B > \omega_{\min} \lambda_{1}(B^T B) I_n$. Since for a tree graph $B^T B$ is invertible \cite{mesbahi2010graph}, we have $B^T B > \lambda_{2}(L) I_n$ which gives the first bound that is rather conservative.\\
ii) The second bound is obtained as a lower bound on the smallest singular value of a general symmetric matrix $A_{m \times m}$ 
\cite{johnson1989gersgorin}
$$\sigma_{\rm min}(A)\geq \min_{i}\{|a_{i,i}| - \left(\sum_{j=1, j \neq i}^{m} |a_{i,j}|\right)\}.
$$
For the case of $A=B^T\boldsymbol\omega B$ as in \eqref{eq:topo}, the above bound will be a lower bound for its smallest eigenvalue. We have $|a_{kk}|=\omega_{i,k}+\omega_{j,k}$ where $k$ is the edge which connects node $i$ to node $j$. Also, $\sum_{j=1, j \neq i}^{m} |a_{i,j}|= 2(d_i-1)\omega_{i,k}+ 2(d_j-1)\omega_{j,k}$, and it gives the result.

The upper bound comes from the Rayleigh quotient inequality \cite{Horn}  is obtained by choosing $\boldsymbol x=\mathbf{e}_k$ where  $\mathbf{e}_k$ is the $k$-th vector of the canonical basis and
$k=\arg \min_{k \in \mathcal{E}} \{\omega_{i,k}+\omega_{j,k}\}$. 
\end{proof}
A comparison between the tightness of the bounds proposed in \eqref{eq:threebounds} is discussed in Section \ref{sec:sub}. 
\begin{proposition}\label{pr3}
Under Assumptions \ref{ass0}-\ref{ass2}, the network in \eqref{eq:rcl} achieves frequency synchronization, \ie\  ${\dot\theta}_i={\dot\theta}_j, \forall i,j \in \mathcal V$ if the condition on $\kappa$ in \eqref{eq:keps} holds.
\end{proposition}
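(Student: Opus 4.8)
\emph{Proof proposal.} The plan is to work with the relative phase dynamics \eqref{eq:rcl} and to construct an energy-type Lyapunov function that is nonincreasing on the invariant set $S'$. Write $z \triangleq B^T\boldsymbol\theta \in \R^m$ (with $m=n-1$), $M \triangleq B^T\boldsymbol\omega B$ and $b \triangleq B^T\boldsymbol\omega\mathbf{1}_n$, so that \eqref{eq:rcl} reads $\dot z = b - \kappa M \sin z$; since $B$ has full column rank $m$ for a tree, this is a well-posed autonomous ODE on $\R^m$ with smooth right-hand side. By Lemma \ref{lem1}, $M$ is symmetric positive definite, hence invertible. By Proposition \ref{pr1}, whenever $\kappa$ satisfies \eqref{eq:keps} the compact set $S'$ of \eqref{eq:s2} is forward invariant, so starting from any $\boldsymbol\theta(0)$ with $B^T\boldsymbol\theta(0)\in S'$ one has $z(t)\in S'$ for all $t\ge0$, and in particular $\cos z_k(t)\ge\cos\eta=\sin\eps>0$ for every edge $k$ and all $t\ge 0$.

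Next I would introduce
\[
W(z) \triangleq \tfrac12\,\dot z^{T} M^{-1}\dot z = \tfrac12\,(b-\kappa M\sin z)^{T} M^{-1}(b-\kappa M\sin z)\ \ge\ 0 ,
\]
a smooth nonnegative function. Differentiating $\dot z$ along the flow gives $\ddot z = -\kappa M\,\diag(\cos z)\,\dot z$, so, using that $M^{-1}$ is symmetric and constant,
\[
\dot W = \dot z^{T} M^{-1}\ddot z = -\kappa\,\dot z^{T}\diag(\cos z)\,\dot z = -\kappa\sum_{k=1}^{m}\cos(z_k)\,\dot z_k^{2} .
\]
The weight $M^{-1}$ is chosen precisely so that $M$ cancels here. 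On $S'$ every $\cos z_k>0$, hence $\dot W\le -\kappa\sin\eps\,\|\dot z\|^2\le 0$, with $\dot W=0$ if and only if $\dot z = 0$, i.e.\ if and only if $B^T\dot{\boldsymbol\theta}=0$.

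Then I would invoke LaSalle's invariance principle on the compact forward-invariant set $S'$: every trajectory starting in $S'$ converges to the largest invariant subset of $\{z\in S':\dot W(z)=0\}=\{z\in S':\dot z=0\}$, which is the set of equilibria of \eqref{eq:rcl} lying in $S'$; this set is nonempty (the $\omega$-limit set is nonempty by compactness and consists of equilibria) and, since any equilibrium $z^{*}\in S'$ satisfies $\sin z^{*}=\tfrac1\kappa M^{-1}b$ and $\sin(\cdot)$ is injective on $(-\tfrac\pi2,\tfrac\pi2)$, it is the singleton $\{z^{*}\}$. Hence $z(t)\to z^{*}$ and, by continuity, $B^T\dot{\boldsymbol\theta}(t)=\dot z(t)=b-\kappa M\sin z(t)\to 0$, so $\dot\theta_i(t)-\dot\theta_j(t)\to 0$ for every edge $(i,j)\in\mathcal E$ and therefore, the tree being connected, $\dot\theta_i(t)-\dot\theta_j(t)\to 0$ for every pair $i,j\in\mathcal V$ — the claimed frequency synchronization. (Alternatively, $\dot z\to0$ follows from $\int_0^\infty(-\dot W)\,dt\le W(z(0))<\infty$ and boundedness of $\ddot z$ on $S'$ via Barbalat's lemma, avoiding the invariance principle altogether.)

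The step I expect to be the crux is establishing the sign of $\dot W$: it is negative semidefinite only because $\cos z_k>0$ for all edges at all times, which is exactly the forward invariance of $S'$ provided by Proposition \ref{pr1} under the coupling bound \eqref{eq:keps}; once that is in hand the energy estimate and the LaSalle conclusion are routine. A secondary point is to check that the equilibrium set in $S'$ is nonempty and a singleton, so that the limit set produced by LaSalle genuinely certifies convergence to a frequency-synchronized configuration.
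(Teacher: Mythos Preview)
Your proposal is correct and is essentially the paper's own argument. The paper sets $z\triangleq B^{T}\dot{\boldsymbol\theta}$, derives $\dot z=-\kappa A W^{\cos\theta}z$ with $A=B^{T}\boldsymbol\omega B$, takes $V=z^{T}A^{-1}z$, and obtains $\dot V\le -\kappa\sin(\eps)\,z^{T}z$ on the invariant set; your $W=\tfrac12\dot z^{T}M^{-1}\dot z$ (with your $z=B^{T}\boldsymbol\theta$) is the same Lyapunov function up to a constant, and your derivative computation is the same cancellation of $M$ against $M^{-1}$. Two minor remarks: (i) since $W$ is a positive-definite quadratic form in $\dot z$ and $\dot W\le -\kappa\sin(\eps)\|\dot z\|^{2}$, you get exponential decay of $\dot z$ directly, so the LaSalle/Barbalat detour and the singleton-equilibrium discussion are not needed for the stated conclusion; (ii) the paper exploits this and simply concludes uniform asymptotic stability of $z=B^{T}\dot{\boldsymbol\theta}=0$ from the quadratic Lyapunov inequality.
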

\begin{proof}
Consider the relative phase dynamics as in \eqref{eq:rcl}. Define $z=B^T \dot{\boldsymbol \theta}$. Calculating $\dot z$, we obtain
\be\label{eq:net22}
\dot z= - \kappa A W^{\cos\theta} z,
\ee
where $A= B^T \boldsymbol\omega B$ and $W^{\cos\theta}$ is a $m \times m$ diagonal matrix with diagonal elements equal to $\cos(\theta_{i,k}-\theta_{j,k})$ where $k \in \{1,\ldots,m\}$ denotes the index of edge $k$ of the graph. Consider the Lyapunov function $V= z^T A^{-1} z$. Assume that the condition on $\kappa$ in \eqref{eq:keps} holds. Thus, from proposition \ref{pr1}, we have $\cos(\theta_{i,k}-\theta_{j,k}) \geq \sin(\eps) >0 $. Hence, $\sin(\eps) I_m \leq W^{\cos\theta} \leq I_m$. We obtain ${\dot V} \leq - \kappa \sin(\eps) z^T z$. Thus, the system in \eqref{eq:net22} is uniformly asymptotically stable which ends the proof. 
\end{proof}

\begin{remark}\label{rem1}
Notice that if $\boldsymbol \omega= \omega I_m, \omega \in (0, +\infty)$, then both the phase and frequency synchronization will be achieved $\forall\  \kappa >0$ \cite{jadbabaie2004stability}. This case is identical with the original Kuramoto model (as in \eqref{eq:kh}) with identical frequencies. 
\end{remark}
\subsection{Examples: Effects of the graph structure, size and distrubution of exogenous frequencies on the sufficient $\kappa$}\label{sec:sub}
This section first presents some examples to show that not only the magnitudes of exogenous frequencies affect the sufficient bound of $\kappa$ but also the way that these frequencies are distributed. Moreover, we provide some examples to compare the tightness of the bounds proposed in \eqref{eq:threebounds} with respect to the graph structure and size. 
\begin{example}\textbf{(Effects of the distribution of exogenous frequencies)}:
We provide some examples to show that not only the magnitude of $|\Delta^{\omega}_{\max}|$ affects $\kappa$ but also the way that the exogenous frequencies are distributed. We consider the effect of changing a single exogenous frequency $\omega_i$ on $\lambda_1(B^T\boldsymbol\omega B)$ and consequently on \eqref{eq:keps}. Suppose that the frequencies of all nodes are 10, except one which is 1. We want to assign this frequency $\omega=1$ to one of the nodes in the network such that the resulting eigenvalue $\lambda_1(B^T\boldsymbol\omega B)$ is maximized. For the case of a star graph, if we place it in the center (hub), $\lambda_1(B^T\boldsymbol\omega B)=10$ is obtained and if we put it on one of the leaves, it gives $\lambda_1(B^T\boldsymbol\omega B)= 2.16$. For the graph shown in Fig. \ref{fig:aafwdnet}-b, which consists of two stars with the same size connected via a single edge, due to the symmetry there are two possibilities; either $\omega_1=1$ or $\omega_8=1$. For the former case, $\lambda_1(B^T\boldsymbol\omega B)=2.67$ is obtained, while the latter case gives $\lambda_1(B^T\boldsymbol\omega B)=1.54$. For the line graph shown in Fig. \ref{fig:aafwdnet}-c, the largest value for $\lambda_1(B^T\boldsymbol\omega B)$ takes place when we assign $\omega_4=1$, which gives $\lambda_1(B^T\boldsymbol\omega B)=1.98$ and it decreases by going from the center of the line to one the ends which gives $\lambda_1(B^T\boldsymbol\omega B)=0.86$.

We observe that in the above examples the optimal place of the vulnerable node should be one of the well-known centralities. Notice that the values of $\lambda_1$ in these examples are reported based on the exactly calculated $\lambda_1(B^T\boldsymbol\omega B)$ for the given graphs. Using the estimated value of $\lambda_1$ in \eqref{eq:threebounds} leads to a similar conclusion. For example, for the star graph in Fig. \ref{fig:aafwdnet}-a, the bound in \eqref{eq:threebounds} gives $5 \leq \lambda_1 \leq 11$ if $\omega_1=1$ and $1 \leq \lambda_1 \leq 11$ if the exogenous frequency of one of leaves is equal to one. 

\begin{figure}[t]
\centering
\includegraphics[scale=0.5]{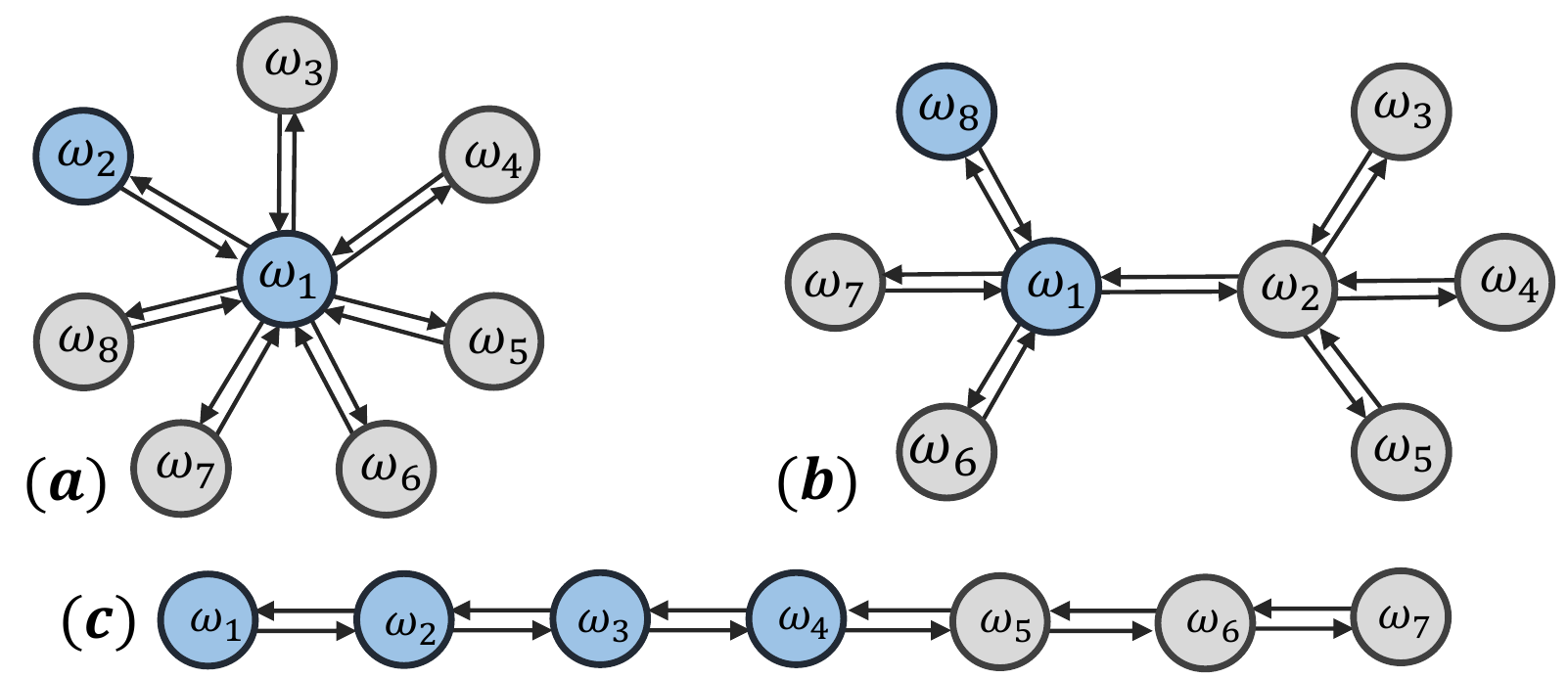}
\caption{Different locations of the vulnerable node in three specific graphs.}\label{fig:aafwdnet}
\end{figure}
\end{example}

\begin{example}\textbf{(Comparing bounds in \eqref{eq:threebounds} and the effect of network size and structure)}:
For the graph shown in Fig. \ref{fig:awdnet}-a the lower bound (i) gives 1, while bound (ii) is 99. Hence, bound (ii) is tighter. If we keep increasing the number of leaves up to 100, the lower bound (ii) is still tighter than (i) (for Fig. \ref{fig:awdnet}-b, (i) gives 1 and (ii) gives 2). For graph Fig. \ref{fig:awdnet}-c,  bound (i) gives 1 and bound (ii) gives  $-49$.  Hence, depending on the network structure, either of the two lower bounds become tighter.

We should note that the largest algebraic connectivity among all trees belongs to star graphs, which is 1. For most of the tree structures, $\lambda_2(L)$ scales with the size of the network, e.g., line graphs. Fig. \ref{fig:awdnet}-d shows the role of network size on the scaling of bound (i). In this example bound (ii) gives zero. The graph topology is a line graph and we know that for these graphs $\lambda_2(L)=1-\cos (\frac{\pi}{n})$ \cite{Abreu}. Thus, bound (i) gives $\omega_{\rm min}\left(1-\cos (\frac{\pi}{n})\right)$ which bigger than zero, although it goes to zero as the network size grows. 

\begin{figure}[b]
\centering
\includegraphics[scale=0.43]{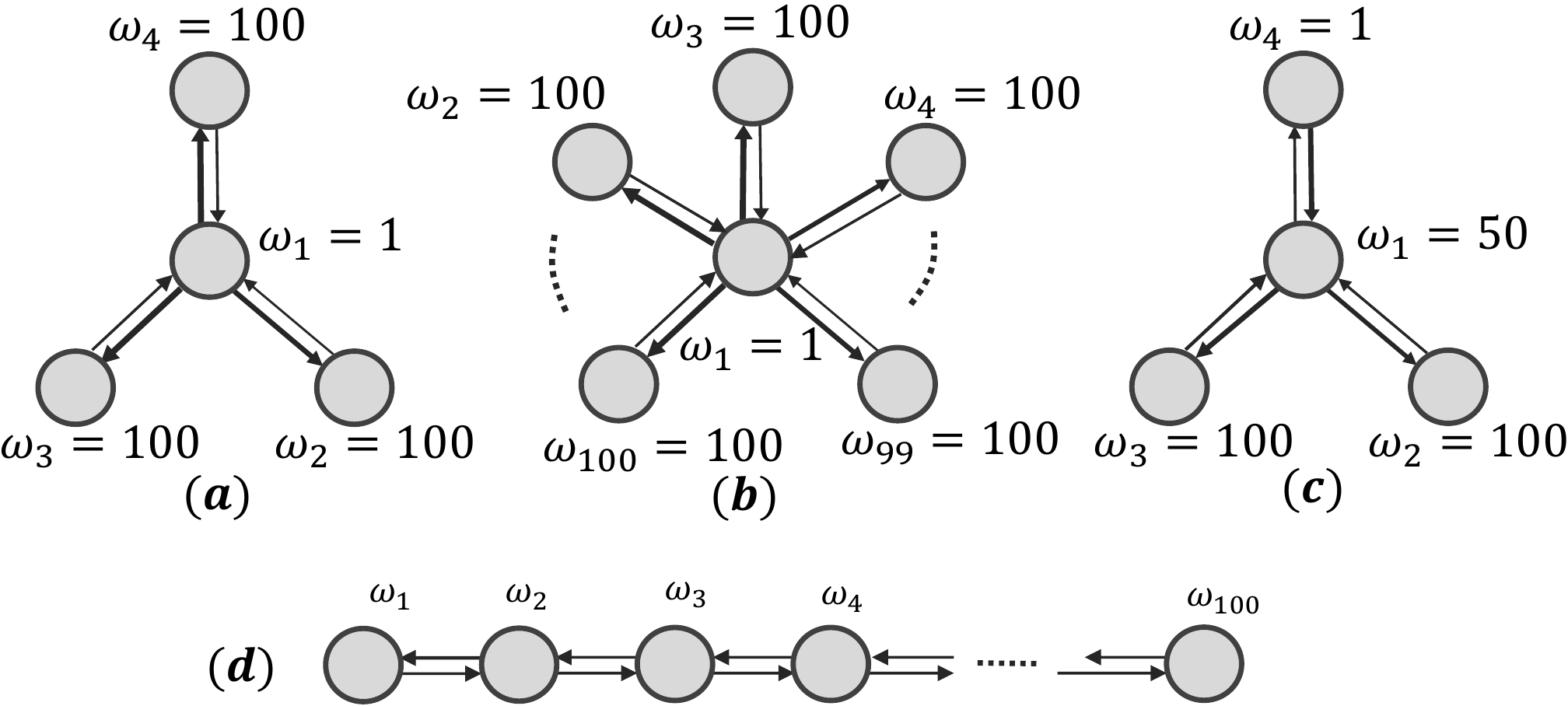}
\caption{Examples which show the tightness of bounds proposed in \eqref{eq:threebounds} and the role of the network scaling and structure on each lower bound.}\label{fig:awdnet}
\end{figure}
\end{example}
\section{Application: An event-based algorithm for frequency synchronization in a star network}\label{sec:event}
In this section, we present an application of using the sufficient coupling strength obtained in Proposition \ref{pr1} by means of a centralized event-based algorithm (e.g. \cite{dimarogonas2012distributed}) for synchronization in a star network. Following the previous section, we are interested in frequency synchronization and phase-cohesiveness, \ie\ $|\theta_j-\theta_i| < \eta$ and ${\dot{\theta}}_i={\dot{\theta}}_j$. To this purpose, we assume that the coupling stiffness $\kappa$ is given. Hence, to derive the network towards frequency synchronization, we manipulate $\omega_i$ using the results of the previous section such that for {\bf each} edge $|\theta_i-\theta_j| < \eta$ ($\eta=\frac{\pi}{2}-\eps, \eps>0$) is enforced. 
Consider a star graph (see Fig. \ref{fig:aafwdnet}-a) with the node dynamics as follows
\be\begin{array}{cc}\label{eq:e1-1}
\dot{\theta}_h = \omega_h  (1- \kappa \sum_{i} \sin(\theta_h-\theta_i)),\\[2mm]
\hspace{5mm} \dot{\theta}_i = (\omega_i + \alpha_i) (1+\kappa \sin(\theta_h-\theta_i)),
\end{array}\ee
where the central node of the star is called {\em hub}, denoted by ${\theta}_h$, and other nodes are called {\em leaf}, denoted by ${\theta}_i$. The problem is how to design $\alpha_i$ in order to achieve our goal. One approach could be based on designing an $\alpha$ controller to continuouesly regulate $\omega_i$, e.g. \cite{jafarian2016disturbance}. In this paper, however, we are interested in an event-based approach which does not require updating the control action for all times.
\begin{assumption}\label{assini}
We assume that
\begin{enumerate}
\item the exogenous frequencies $\omega_j$ with $j \in \{h,i\}$ are slow enough to be estimated as constant for a large enough period of time,
\item relative phases and their derivatives, \ie,\  $\theta_i-\theta_h$ and ${\dot{\theta}}_i-{\dot{\theta}}_h$, are known to the hub.
\end{enumerate}
\end{assumption}
Under the above assumption, we design a centralized event-triggered algorithm such that the hub updates $\boldsymbol \alpha =(\alpha_2,\ldots,\alpha_n)^T$ for achieving frequency synchronization. For system \eqref{eq:e1-1}, we define a set of triggering times $t_h^0, t_h^1, t_h^2, \ldots$ at which the vector $\boldsymbol \alpha$ gets updated, such that $\boldsymbol \alpha(t)=\boldsymbol \alpha(t_h^k), t \in [t_h^k,t_h^{k+1})$.\\
Before presenting the algorithm, we first define the events ($E_1, E_2$) and the required calculations for updating $\boldsymbol \alpha$.\\
Event $E_1^i$ is activated if the relative phase of nodes $i$ and $h$ is larger than a prescribed limit. In this case, the leaf exo-frequency will be updated by adjusting $\alpha_i$ while the hub exo-frequency is kept unchanged. Event $E_1$ is activated if there exists one edge which meets the triggering condition. Hence, $E_1= \bigcup\limits_{i=2}^{n} E_1^i$, where $E_1^i$ denotes event $E_1$ for node $i$.\\
Besides event $E_1$, event $E_2$ is designed to update the triggering condition of each edge in order to avoid chattering (repetitive switchings) of $E_1^i$ \cite{jafarian2017robust} (see Remark \ref{chat}). We write $E_2= \bigcup\limits_{i=2}^{n} E_2^i$.\\
{\bf Event $E_1$ (Update of $\alpha_i$):} $\exists i \in \{2, \ldots, n\} \quad \text {s.t.} \quad |\theta_h-\theta_i| > \eta$, with $\eta=\frac{\pi}{2}-\eps >0$. Let $\alpha_i$ denote $\alpha_i(t_h^k)$ and $\alpha_i^+$ denote $\alpha_i(t_h^{k+1})$ for node $i$. 

The update of $\alpha_i$ (\ie,\ $\alpha_i^+$) should guarantee that the link $h,i$ is contributing to the decrease of the overall Lyapunov function of the system (see the proof of Proposition \ref{pr1}). Hence, the sufficient condition on $\kappa$ should locally hold. Considering \eqref{eq:keps}, $\alpha_i^+$ should locally guarantee that $$\kappa > \frac{|{\omega}_h-(\omega_i+\alpha_i^+)|}{\lambda_{1}(B^T \boldsymbol \omega B) \cos(2 \eps)}.$$ In the above, considering $\cos(2 \eps)$ guarantees that $\dot V$ will be locally negative for $\theta_{h,i}> \frac{\pi}{2}- 2 \eps$, and thus it will negative for ${\theta}_{h,i} > \eta,\ \eta=\frac{\pi}{2}-\eps$ (see Fig \ref{fig:ly}). To locally estimate $\lambda_{1}(B^T \boldsymbol \omega B)$, which is required for updating $\alpha_i$, we use the estimation based on $\omega_{\min} \lambda_{2}({B B^T})$ (see in \eqref{eq:threebounds}). Define $\omega_{\min}^L=\min \{{\omega}_h, \omega^\ast\}$, where $\omega_{\min}^L$ denotes the local estimation of $\omega_{\min}$ and $\omega^\ast$ is the desired value for ${\omega_i+\alpha_i^+}$. Since $\omega^\ast$ is a design choice, we opt for the case where $\omega^\ast > {\omega}_h$ (motivated by the examples in Section \ref{sec:sub}). Notice that for a star graph $\lambda_{2} (B B^T)=1$. Hence, from \eqref{eq:keps}, $\kappa > \frac{\omega^\ast-{\omega}_h}{{\omega}_h \cos(2 \eps)}$ should hold. The hub then calculates
\begin{equation}\label{eq:e3es2}
\omega^\ast= {\omega}_h (1+\Delta \cos(2 \eps)),  
\end{equation} 
and updates 
\begin{equation}\label{eq:ai}
\alpha_i^+ = \omega^\ast- (\omega_i+\alpha_i).
\end{equation}
Notice that in calculation of $\omega^\ast$, it is assumed that $\omega_h$ is known to the hub. In fact, the structure of the star graph together with Assumption \ref{assini}-2 allow the hub to calculate both $\omega_h$ and $\omega_i$ by using the first and second derivatives of $\theta_h-\theta_i$.
Now, we present our event-triggered algorithm as follows.\\[1mm]
Consider Assumption \ref{assini}. Let $t_0\ge0$ denote the initial time and $\{t^k_h\}_{k=1}^{\infty}$ denote the triggering times of the hub determined by E1 or E2 (see Remark \ref{chat}).\\[3mm] 
\noindent {\it Algorithm 1}:
\begin{algorithmic}[1]
\State Choose $\eta>0$ and $\eps>0$.
\State Initialize $\alpha_i=0$, $t^{0}_h=t_0$ and $k=0$.
\State From time $s=t^{k}_h$, the hub continuously senses $\theta_h-\theta_i$ to detect $E=E_1 \bigcup E_2$, where $E_1= \bigcup\limits_{i=2}^{n} E_1^i$ and $E_2= \bigcup\limits_{i=2}^{n} E_2^i$, such that $\tau=\inf\{r\ge s:~\exists i\ \text{s.t.}\  E_1^i\  \text{or}\  E_2^i\},$ where $E_1^i$ implies $\{~\exists i,\ ~|\theta_h-\theta_i|>\eta\}$ and $E_2^i$ implies $\{~\exists i,\ ~|\theta_h-\theta_i|<\eta-\eps\}$.
\State If $E$, the hub determines $t^{k+1}_h=\tau$, and for every $i$ which meets $E_1^i$, the hub updates $\alpha_i$ based on \eqref{eq:ai} and replaces the definition of $E_1^i$ with $E_2^i$. Also, for every $j$ which meets $E_2^j$, the hub replaces the definition of $E_2^j$ with $E_1^j$. The hub goes back to Step 3.
\end{algorithmic}

\begin{remark}\label{chat}[Avoiding chattering]
After triggering $E_1^i$, the same event can be immediately triggered since there will be a $\Delta t$ till the condition of event $E_1^i$ is violated. To prevent this behavior, event $E_2^i$ is introduced. In fact, $E_2^i$ will temporarily replace $E_1^i$ and thus will avoid chattering of this event. Notice that although it is possible that two different edges trigger an event at the same time (for example edge $i$ triggers $E_1^i$ and edge $j$ triggers $E_2^j$ simultaneously), it is impossible that $E_1^i,E_2^i$ occur for the same edge simultaneously.
\end{remark}
System \eqref{eq:e1-1} with the above event-triggered algorithm can be represented as the following hybrid system with state $(\boldsymbol\theta,\boldsymbol\alpha)$ such that the continuous evolution of the system obeys 
\begin{equation}\label{eq:e1}
\begin{array}{lll}
\dot{\boldsymbol \theta} &= \boldsymbol \omega^{\alpha} (\mathbf{1}_n - \kappa B \sin(B^T \boldsymbol \theta)),\\[1mm]
\dot {\boldsymbol \alpha} &= 0,
\end{array}\end{equation}
where $\boldsymbol \omega^{\alpha}$ is a diagonal matrix whose diagonal elements are $\omega_1^\alpha=\omega_h$ and $\omega_i^\alpha=\omega_i+\alpha_i$. Also, if there is a link which meets the jump condition, the following discrete transition occurs
\begin{equation}\label{eq:e2}
{\boldsymbol \theta}^+={\boldsymbol \theta}, \quad \alpha_i^+ = \omega^\ast-(\omega_i+\alpha_i).  
\end{equation}
\begin{proposition}\label{pr4}
There is a lower bound on the inter-triggering times of the solutions to hybrid system \eqref{eq:e1}-\eqref{eq:e2} considering a star graph with Assumptions \ref{ass1}-\ref{assini}. Moreover, the set $S' \subset S$, with $S'$ in \eqref{eq:s2} and $S$ in \eqref{eq:s} , is forward invariant for system \eqref{eq:e1}-\eqref{eq:e2}. 
\end{proposition}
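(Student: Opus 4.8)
I would treat \eqref{eq:e1}--\eqref{eq:e2} as a hybrid system and argue, in the standard way, that $S'$ is left neither along the continuous flow nor across a discrete jump, and, separately, read off from the definitions of $E_1^i$ and $E_2^i$ a uniform positive lower bound on the time between two consecutive triggerings of the \emph{same} edge. Since a star on $n$ nodes has only $n-1$ edges, such a per‑edge bound precludes accumulation of triggering instants, makes the hybrid arc forward complete, and then the flow/jump analysis yields forward invariance of $S'$ for all $t\ge0$.

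The jumps are harmless: by \eqref{eq:e2} the reset acts only on $\boldsymbol\alpha$, so $\boldsymbol\theta^+=\boldsymbol\theta$, hence $V(\boldsymbol\theta^+)=V(\boldsymbol\theta)$ and $\boldsymbol\theta\in S'\Rightarrow\boldsymbol\theta^+\in S'$. Along a flow segment \eqref{eq:e1} coincides with the closed loop \eqref{eq:cl} for the frozen matrix $\boldsymbol\omega^{\alpha}$, so the relative‑angle dynamics \eqref{eq:rcl} and the computation of $\dot V$ in the proof of Proposition \ref{pr1} carry over after replacing $\boldsymbol\omega$ by $\boldsymbol\omega^{\alpha}$, $\Delta^{\omega}_{\max}$ by $\Delta^{\omega^{\alpha}}_{\max}$ and $\lambda_1(B^T\boldsymbol\omega B)$ by $\lambda_1(B^T\boldsymbol\omega^{\alpha}B)$; by Proposition \ref{pr1}, $S'$ is then forward invariant for that segment whenever \eqref{eq:keps} holds for $\boldsymbol\omega^{\alpha}$. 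That inequality may fail at $t_0$, and the role of $E_1^i$ is exactly to restore it whenever an edge reaches $\partial S$: at $|\theta_h-\theta_i|=\eta$ the update \eqref{eq:e3es2}--\eqref{eq:ai} sets $\omega_i+\alpha_i^+=\omega^{\ast}$, which --- using $\lambda_2(BB^T)=1$ for a star and bound (i) of Proposition \ref{pr2} for the local estimate $\lambda_1(B^T\boldsymbol\omega^{\alpha}B)\ge\omega_h$ --- satisfies the local version of \eqref{eq:keps} with the enlarged margin $\cos(2\eps)$ in place of $\cos(\eps)$ (this is the choice of $\omega^{\ast}$ in \eqref{eq:e3es2}, for which $\kappa>(\omega^{\ast}-\omega_h)/(\omega_h\cos(2\eps))$). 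Consequently the contribution $\dot V_i$ of edge $i$ to the bound \eqref{eq:p23} is strictly negative for $\theta_h-\theta_i\in(\tfrac{\pi}{2}-2\eps,\tfrac{\pi}{2})$, in particular at $|\theta_h-\theta_i|=\eta$ (cf.\ Fig.\ \ref{fig:ly}); thus no edge crosses $|\theta_h-\theta_i|=\eta$, the flow stays in $S$, and since $S'$ is the largest sublevel set of $V$ contained in $S$ one concludes, exactly as in Proposition \ref{pr1}, that $\dot V\le0$ on $\partial S'$ and that $S'$ is forward invariant along the flow.

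For the inter‑triggering bound, fix a leaf $i$. Right after $E_1^i$ fires we have $|\theta_h-\theta_i|=\eta$ and the monitored event becomes $E_2^i$, which cannot fire until $|\theta_h-\theta_i|$ has decreased to $\eta-\eps$; right after $E_2^i$ fires we have $|\theta_h-\theta_i|=\eta-\eps$ and $E_1^i$ is restored, which cannot fire until $|\theta_h-\theta_i|$ has grown back to $\eta$. In either case $|\theta_h-\theta_i|$ must move by $\eps>0$ before edge $i$ triggers again, while along the flow $\boldsymbol\theta$ stays in the compact set $S'$ and, by the standing assumptions (in particular Assumption \ref{assini}), $\omega_h$ and $\omega_i+\alpha_i\in\{\omega_i,\omega^{\ast}\}$ are bounded, so $|\dot\theta_h-\dot\theta_i|\le\omega_h(1+\kappa(n-1))+(\omega_i+\alpha_i)(1+\kappa)=:M<\infty$. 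Hence two consecutive triggerings of edge $i$ are at least $\eps/M$ apart; with $n-1$ edges, at most $n-1$ triggerings occur in any window of length $\eps/M$, triggering instants do not accumulate, the hybrid solution is forward complete, and, combined with the flow and jump steps, $S'\subset S$ is forward invariant for \eqref{eq:e1}--\eqref{eq:e2}.

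I expect the main obstacle to be the flow step: $\alpha_i$ in \eqref{eq:ai} is updated locally --- for a single edge and with the coarse estimate (i) of $\lambda_1(B^T\boldsymbol\omega B)$ --- yet one must still certify that the \emph{aggregate} Lyapunov function $V$ does not increase on $\partial S'$ even while some other edges have not yet been updated and may carry large $|\omega_h-\omega_j|$. This is precisely where the margin $\cos(2\eps)$ in \eqref{eq:e3es2} is spent, and where one has to argue (as in the proof of Proposition \ref{pr1}, and using that on $\partial S'$ at least one edge sits on $\partial S$) that edges whose relative phase is bounded away from $\pi/2$ are dominated by the edges that have reached the boundary of $S$. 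Once the arc is confined to the compact set $S'$, the dwell‑time estimate above is routine, and frequency synchronization on the updated network then follows from Proposition \ref{pr3}.
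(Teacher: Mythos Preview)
Your approach is essentially the same as the paper's: jumps leave $\boldsymbol\theta$ unchanged so $V$ is preserved across resets; along flow segments one reuses the Lyapunov analysis of Proposition~\ref{pr1} with $\boldsymbol\omega$ replaced by $\boldsymbol\omega^{\alpha}$ and relies on the design of $\alpha_i$ via \eqref{eq:e3es2}--\eqref{eq:ai} to enforce the local version of \eqref{eq:keps}; and the dwell-time bound comes from the fixed $\eps$-gap between the thresholds of $E_1^i$ and $E_2^i$ together with boundedness of $\dot{\boldsymbol\theta}$ and finiteness of the edge set. Your treatment is in fact more quantitative than the paper's (you give an explicit velocity bound $M$ and the per-edge dwell time $\eps/M$), and the obstacle you flag in your last paragraph---that a \emph{local} update for a single edge need not guarantee $\dot V\le0$ on $\partial S'$ while other edges are still un-updated---is glossed over in the paper as well, which simply asserts that ``the correcting terms $\alpha_i$ are designed such that the sufficient condition on $\kappa$ is respected'' and invokes Proposition~\ref{pr1}.
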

\begin{proof}
First we prove that $S'$ is forward invariant. Since at the switches the state $\theta$ stays unchanged, we take a similar Lyapunov function as in the proof of Proposition \ref{pr1}. Since event $E_1$ prevents the phase differences to grow larger than $\eta$, and also the correcting terms $\alpha_i$ are designed such that the sufficient condition on $\kappa$ is respected, then based on the argument in the proof of Proposition \ref{pr1}, the set $S'$ is forward invariant for \eqref{eq:e1}-\eqref{eq:e2}. To prove the existence of a non-zero dwell time, we argue that an edge cannot trigger $E_1^i$ and $E_2^i$ at the same time (see Remark \ref{chat}). Also, after triggering either $E_1^i$, at least a distance ($\eps$) should be paved with a bounded velocity (since $\dot {\boldsymbol \theta}$ is bounded) which implies $E_2^i$ does not instantaneously happen after $E_1^i$. The latter  indicates that there exists a non-zero dwell time between each two triggering times. Notice that at each triggering time, it is possible to have more than one edge that trigger an event but number of nodes if finite. Hence, there is no zeno behavior and the solutions can evolve in time.
\end{proof}
\begin{proposition}\label{pr5}
Under Assumptions \ref{ass1}-\ref{assini}, a star network with the dynamics as in \eqref{eq:e1}-\eqref{eq:e2} with the event-triggered Algorithm 1 achieves frequency synchronization.
\end{proposition}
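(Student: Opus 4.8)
The plan is to reduce Proposition~\ref{pr5} to Proposition~\ref{pr3}: as soon as the correction vector $\boldsymbol\alpha$ has settled to a constant value, the closed loop becomes an autonomous Kuramoto-type system of the exact form treated in Proposition~\ref{pr3}. Proposition~\ref{pr4} supplies the structural facts needed to make this rigorous: the solutions of the hybrid system \eqref{eq:e1}-\eqref{eq:e2} are defined for all $t\ge t_0$ (no Zeno, since the inter-event times are uniformly bounded below) and they stay in the forward invariant set $S'\subset S$, so that $|\theta_h-\theta_i|\le\eta=\tfrac{\pi}{2}-\eps$ holds along the whole trajectory for every leaf $i$.

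The first step is to show that $\boldsymbol\alpha$ reaches a constant value $\boldsymbol\alpha^\ast$ in finite time and that, for the resulting time-invariant star network, the coupling $\kappa$ satisfies the sufficient condition \eqref{eq:keps}. Each firing of $E_1^i$ applies \eqref{eq:e3es2} and \eqref{eq:ai}, which resets the effective leaf frequency $\omega_i+\alpha_i$ to the common design value $\omega^\ast=\omega_h\bigl(1+\Delta\cos(2\eps)\bigr)>\omega_h$; once every leaf has been corrected --- which, as discussed below, happens in finite time --- $\boldsymbol\omega^\alpha$ is the constant diagonal matrix carrying $\omega_h$ at the hub and $\omega^\ast$ at all leaves, so that all edge mismatches equal $|\omega^\ast-\omega_h|=\omega_h\Delta\cos(2\eps)=\Delta^{\omega}_{\max}$. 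For the star, $\lambda_2(BB^T)=1$, hence bound (i) of \eqref{eq:threebounds} gives $\lambda_1(B^T\boldsymbol\omega^\alpha B)\ge\omega_{\min}^L\lambda_2(BB^T)=\min\{\omega_h,\omega^\ast\}=\omega_h$. Since the hub picks $\omega^\ast$ so that $\kappa\,\omega_h\cos(2\eps)>\Delta^{\omega}_{\max}$, and $\cos(2\eps)<\cos(\eps)$, we obtain $\kappa>\Delta^{\omega}_{\max}/(\omega_h\cos(\eps))\ge\Delta^{\omega}_{\max}/\bigl(\lambda_1(B^T\boldsymbol\omega^\alpha B)\cos(\eps)\bigr)$, i.e.\ \eqref{eq:keps} holds for the corrected network.

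Once $\boldsymbol\alpha\equiv\boldsymbol\alpha^\ast$, the dynamics is the smooth system $\dot{\boldsymbol\theta}=\boldsymbol\omega^\alpha(\mathbf{1}_n-\kappa B\sin(B^T\boldsymbol\theta))$ evolving in the forward invariant set $S'$, with $A=B^T\boldsymbol\omega^\alpha B$ positive definite (Lemma~\ref{lem1}) and $\kappa$ meeting \eqref{eq:keps}. This is precisely the setting of Proposition~\ref{pr3}: with $z=B^T\dot{\boldsymbol\theta}$ one has $\dot z=-\kappa A W^{\cos\theta}z$ as in \eqref{eq:net22}, on $S'$ every diagonal entry $\cos(\theta_h-\theta_i)$ of $W^{\cos\theta}$ is at least $\sin(\eps)>0$, and the Lyapunov function $z^TA^{-1}z$ has derivative at most $-\kappa\sin(\eps)z^Tz$, so $z(t)\to 0$, i.e.\ $\dot\theta_i(t)-\dot\theta_j(t)\to 0$ for all $i,j$. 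Hence the star network achieves frequency synchronization.

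The step I expect to be the main obstacle is the finite-time stabilization of $\boldsymbol\alpha$, i.e.\ ruling out an infinite alternation of $E_1^i$/$E_2^i$ transitions near the boundary of $S'$. Here I would use both halves of Proposition~\ref{pr4}: right after a correction, the local sign analysis in the proof of Proposition~\ref{pr1} --- performed with the $\cos(2\eps)$ margin of \eqref{eq:e3es2} --- forces $\dot V_i<0$ on $|\theta_h-\theta_i|>\eta-\eps$, so the band between $\eta-\eps$ and $\eta$ is crossed strictly inward and cannot be re-crossed without a further event, while the uniform positive dwell time forbids events from accumulating; moreover, any $E_1^i$ firing that occurs after $\omega_i+\alpha_i$ has already been set to $\omega^\ast$ leaves $\boldsymbol\alpha$ and, by \eqref{eq:e2}, $\boldsymbol\theta$ unchanged, so residual $E_1$ events are harmless. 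A secondary point to check is that the on-line estimate of $\lambda_1$ used by the hub is conservative in the correct direction, which is immediate from \eqref{eq:threebounds}, and that leaves which never trigger $E_1^i$ (because their phase difference stays below $\eta$) do not spoil \eqref{eq:keps} for the limiting network.
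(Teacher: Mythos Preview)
Your approach is essentially the paper's: argue that $\boldsymbol\alpha$ freezes after a finite time $T$, then run the Lyapunov argument of Proposition~\ref{pr3} on the resulting autonomous system via $z=B^T\dot{\boldsymbol\theta}$, $\dot z=-\kappa A^\alpha W^{\cos\theta}z$, and $V=z^T(A^\alpha)^{-1}z$, using $W^{\cos\theta}\ge\sin(\eps)I_m$ on $S'$ to get $\dot V\le -\kappa\sin(\eps)\,z^Tz$.

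The one substantive difference is that the paper does \emph{not} try to re-establish \eqref{eq:keps} for the post-settling network. It simply invokes the forward invariance of $S'$ already supplied by Proposition~\ref{pr4} (which is enforced by the event mechanism, not by \eqref{eq:keps}) and feeds that directly into the $\dot V$ estimate. This short-circuits your ``secondary point to check'': leaves that never fire $E_1^i$ may indeed leave $\Delta^{\omega}_{\max}/\lambda_1$ too large for \eqref{eq:keps}, but that is irrelevant because the bound $\cos(\theta_h-\theta_i)\ge\sin(\eps)$ comes from Proposition~\ref{pr4}, not from Proposition~\ref{pr1}. Your second paragraph is therefore extra work that creates a spurious gap; drop it and your argument matches the paper's exactly. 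Your treatment of the finite-time settling of $\boldsymbol\alpha$ is in fact more careful than the paper's, which simply asserts the existence of $T$ from finiteness of the node set.
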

\begin{proof}
Define $z=B^T \dot{\boldsymbol \theta}$. Following \eqref{eq:net22}, consider
\be\label{eq:net22h}
\dot z= - \kappa A^\alpha W^{\cos\theta} z,
\ee
where $A^\alpha= B^T \boldsymbol\omega^{\sigma} B$ and $W^{\cos\theta}$ is defined as before. We now argue that there exists a time $T$ at which all nodal exogenous frequencies are either updated or will stay unchanged. We reason as follows. If $\theta_h-\theta_i$ exceeds the designed upper-bound, an event will be generated and the exogenous frequency of node $i$ will be replaced by $\omega^\ast$. If no event is generated by node $i$, we conclude that $\theta_h-\theta_i$ is small enough and within the desired bound. Since, the number of nodes are finite, there exits a finite time $T$ at which the exogenous frequencies will not get updated anymore. We label the exogenous nodal frequency of node $i$ after time $T$ by ${\omega_i}^\ast$. Take $V_2=z^T M^{-1} z$ with $M= \omega_h \mathbf 1_{m \times m} + ({\omega_i}^\ast) I_m$, where $M>0$ has the structure of $A^\alpha$ (see \eqref{eq:topo}), as the Lyapunov candidate. Notice that for $t \geq T$, $M=A^\alpha$ since $\boldsymbol\omega^{\sigma}$ will be constant. Thus, at jumps, $\dot{\boldsymbol \theta}$ (hence $z$) will stay unchanged. Also, $$\{\forall t \geq T, \quad \dot{V_2} \leq -\kappa \sin{\eps} z^T z\},$$
which ends the proof.
\end{proof}
\section{Simulation results}\label{sec:sim}
This section presents simulation results for a network of four oscillators over a star and a line graph topology. The initial condition for the oscillators is set to $\boldsymbol \theta(0)=[\frac{\pi}{4},\frac{\pi}{10},\frac{\pi}{2},\frac{\pi}{5}]$. We simulate both star and line networks for two sets of exogenous frequencies  $\omega=[20,3,2,1]$ and $\bar\omega=[1,10,5,6]$. For the star graph node $1$ coincides with the hub and for the line graph node $1$ and $4$ are terminal nodes. Table \ref{t1} shows the exact value of the minimum eigenvalue for each of the cases (the incidence matrix for the star graph is denoted by $B_s$ and for the line graph with $B_\ell$), together with the bound obtained based on \eqref{eq:threebounds} and a sufficient bound for $\kappa$. The latter is calculated using the exact value of $\lambda_{\min}$ reported in the second column of the table. As shown, the lower bound of $\kappa$ with $\bar\omega$, where the hub frequency is minimum, is smaller (for both star and line graph) than $\omega$ (see Section \ref{sec:sub}).\\
\begin{center}
\begin{tabu} to 0.45\textwidth { | X[l] | X[c] | X[c] | X[c] | }
\hline
Choice & $\lambda_{\min}$ & Estimation in \eqref{eq:threebounds} & $\kappa$ \\
\hline
$B_s^T \boldsymbol \omega B_s$ & 1.42  & 1 & 13.4\\
\hline
$B_s^T \boldsymbol {\bar\omega} B_s$ & 5.36  & 4 & 1.68\\
\hline
$B_\ell^T \boldsymbol \omega B_\ell$ & 1.64  & 0.58 & 10.36\\
\hline
$B_\ell^T \boldsymbol {\bar\omega} B_\ell$ & 1.64  & 0.58 & 5.48\\
\hline
\end{tabu}\\[1mm]
\small{Table 1}\label{t1}
\end{center}
Figure \ref{sim1} (Fig. \ref{sim2}) shows the relative phases and nodal frequencies for two sets of exogenous frequencies $\omega$ and ${\bar\omega}$ for a star (line) graph. We used $\kappa=5$ for all four cases. As shown in Figure \ref{sim2}, the line graph also achieves frequency synchronization with a $\kappa$ smaller than the sufficient bound obtained in Table \ref{t1}. In all cases, relative phases converge to a non-zero value and all nodal frequencies reach a consensus.
\begin{figure}[h]
\centering
\includegraphics[scale=0.22]{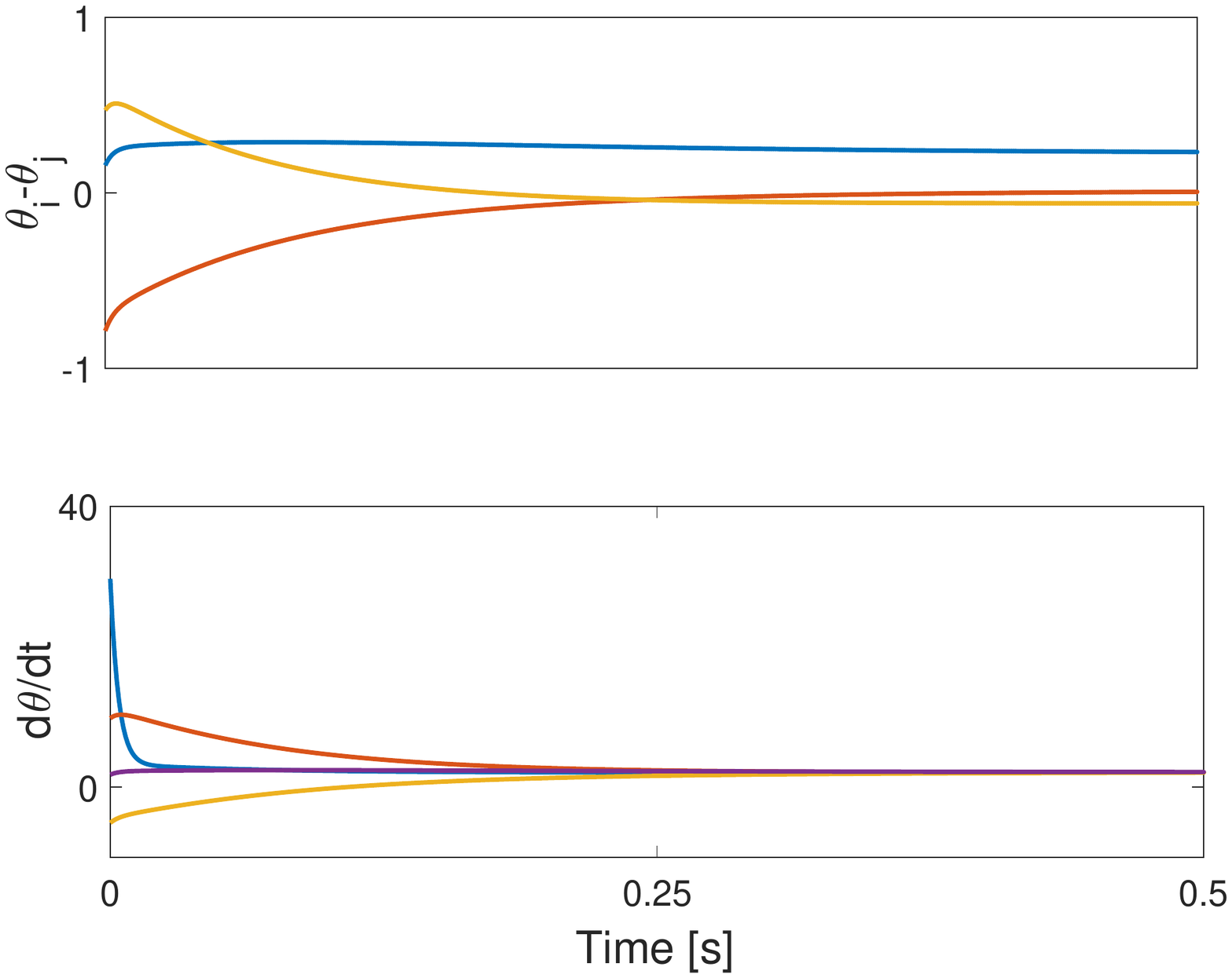}\hspace{0.5mm}\includegraphics[scale=0.22
]{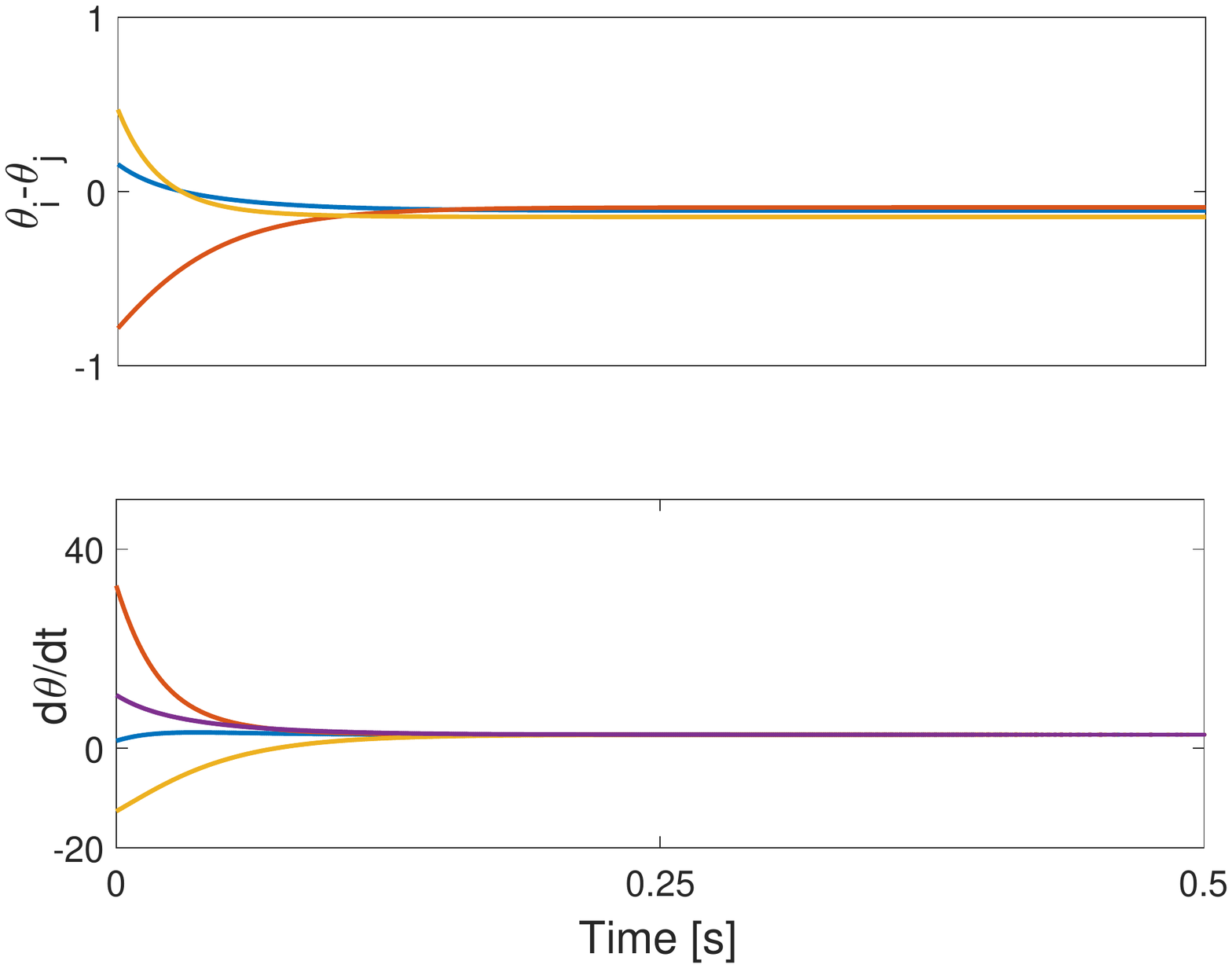}
\caption{A star graph with 4 nodes and two sets of exogenous frequencies $\boldsymbol \omega$ (left) and $\boldsymbol {\bar\omega}$ (right).}\label{sim1}
\end{figure}
\begin{figure}[h]
\centering
\includegraphics[scale=0.22]{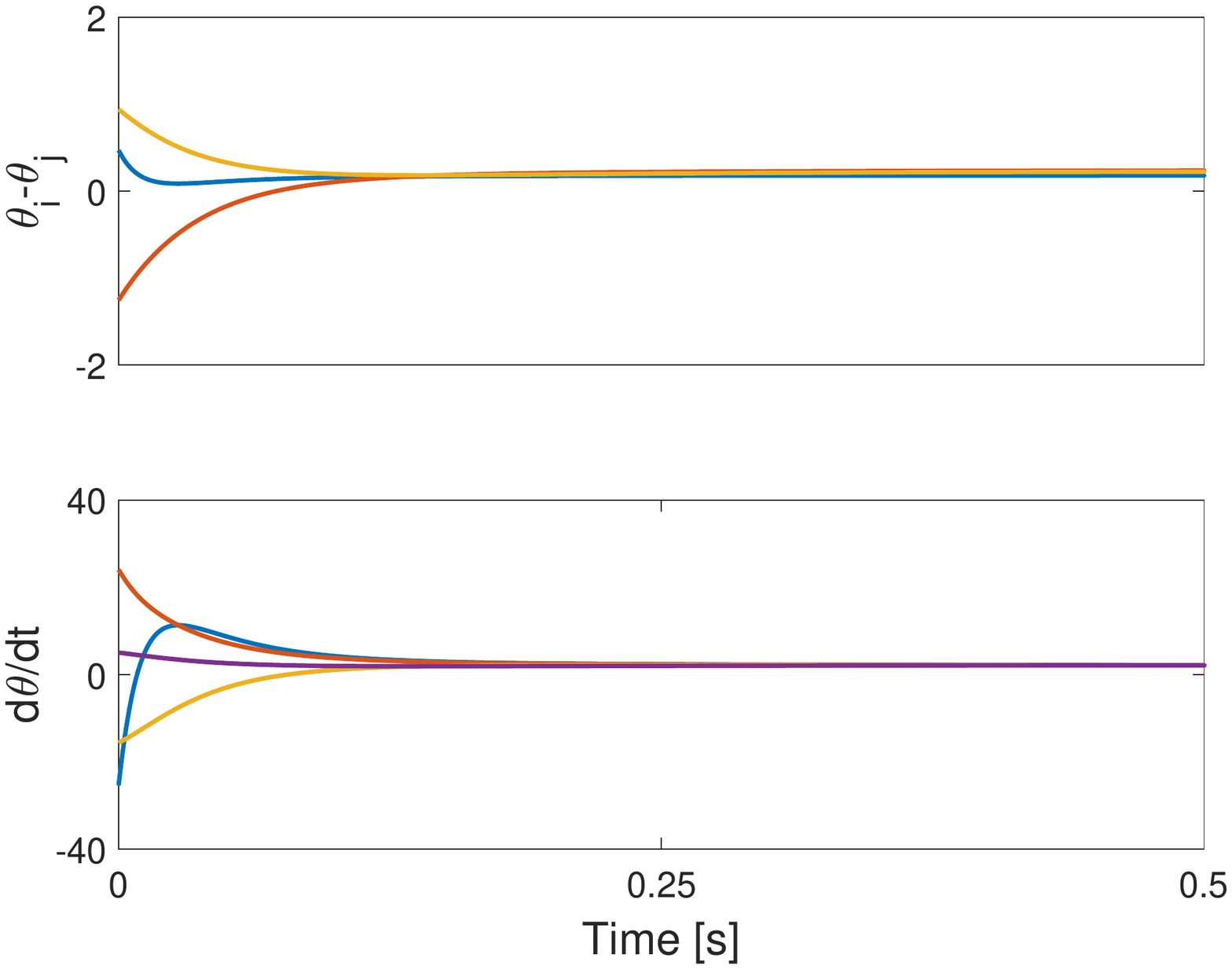}\hspace{0.5mm}\includegraphics[scale=0.22
]{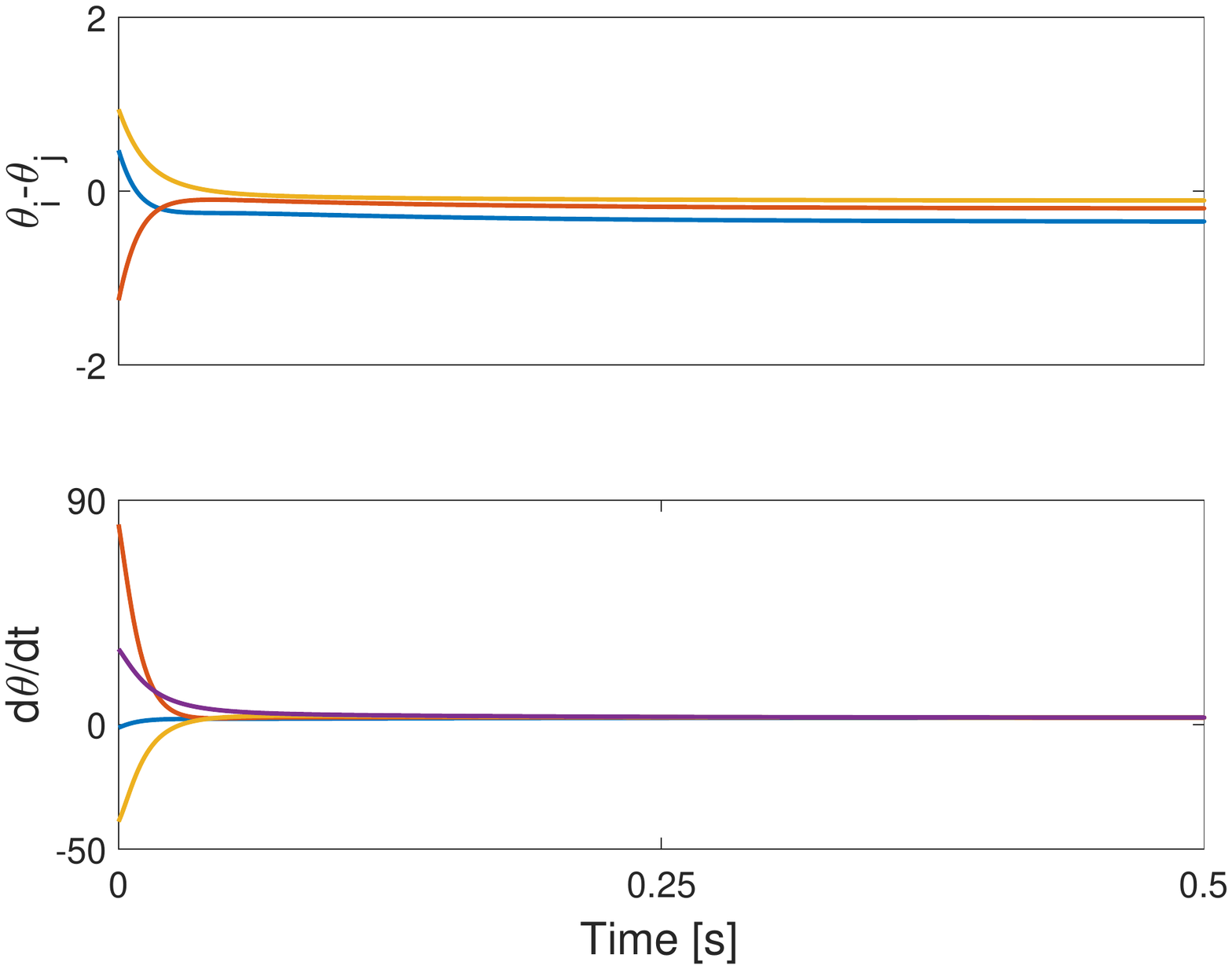}
\caption{A line graph with 4 nodes and two sets of exogenous frequencies $\boldsymbol \omega$ (left) and $\boldsymbol {\bar\omega}$ (right).}\label{sim2}
\end{figure}
To simulate the results of the event-triggered algorithm, we first take a star graph with 4 nodes with the same initial conditions as the above examples. We set $\omega=[20,18,16,6]$ and $\kappa=1.1$. The results is shown in Figure \ref{sim3}-(1). As shown the frequencies de-synchronize and the relative phases are unbounded. For the same network, we use the event-triggered control with Algorithm 1, and set $\Delta=1.1$, $\eta=\frac{\pi}{2}-\frac{\pi}{10}$, and $\eps=\frac{\pi}{10}$. The results are shown in Figure \ref{sim3}-(2). As shown the oscillators synchronize.
\begin{figure}[h]
\centering
\includegraphics[scale=0.2]{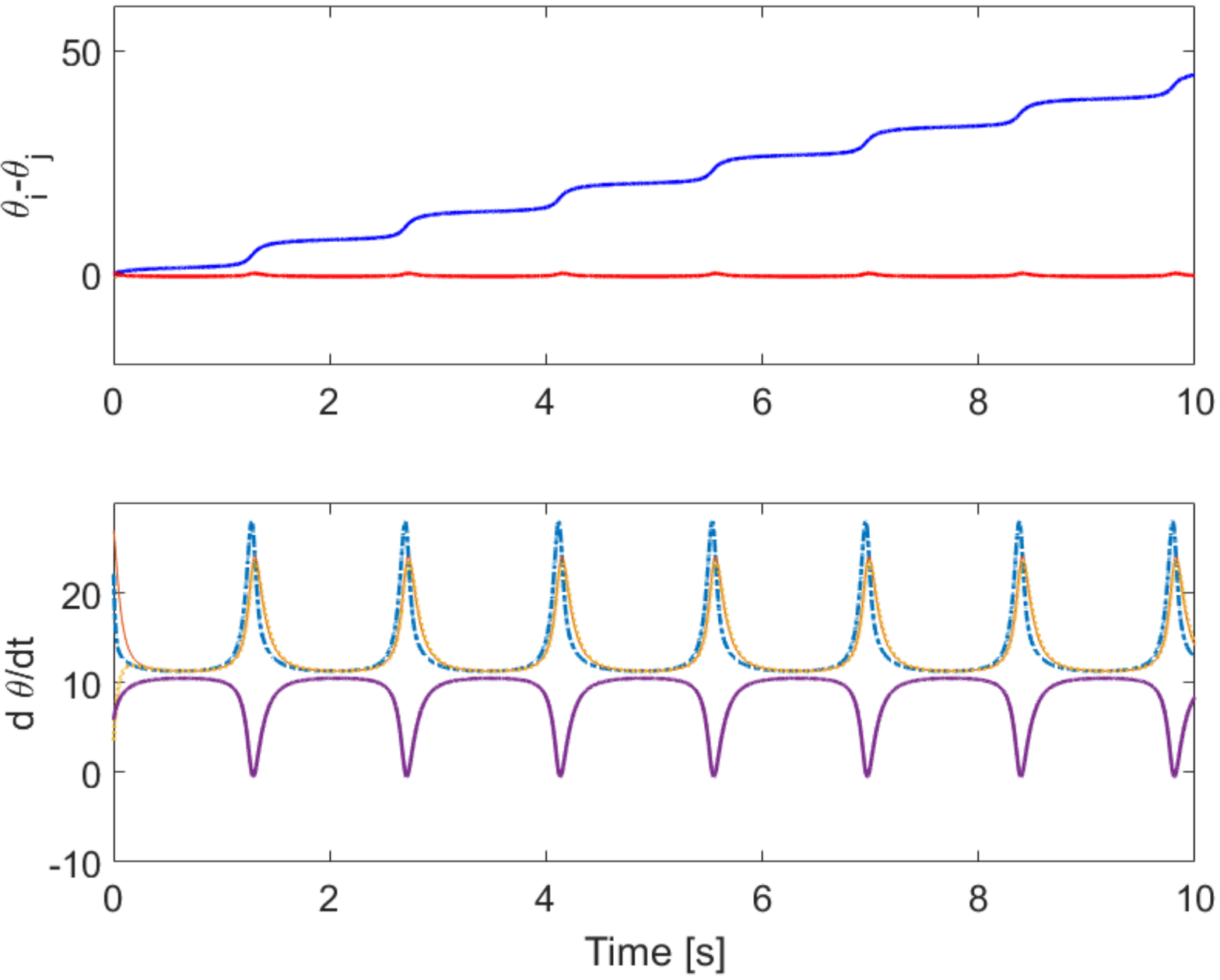}\hspace{0.1mm}\includegraphics[scale=0.21]{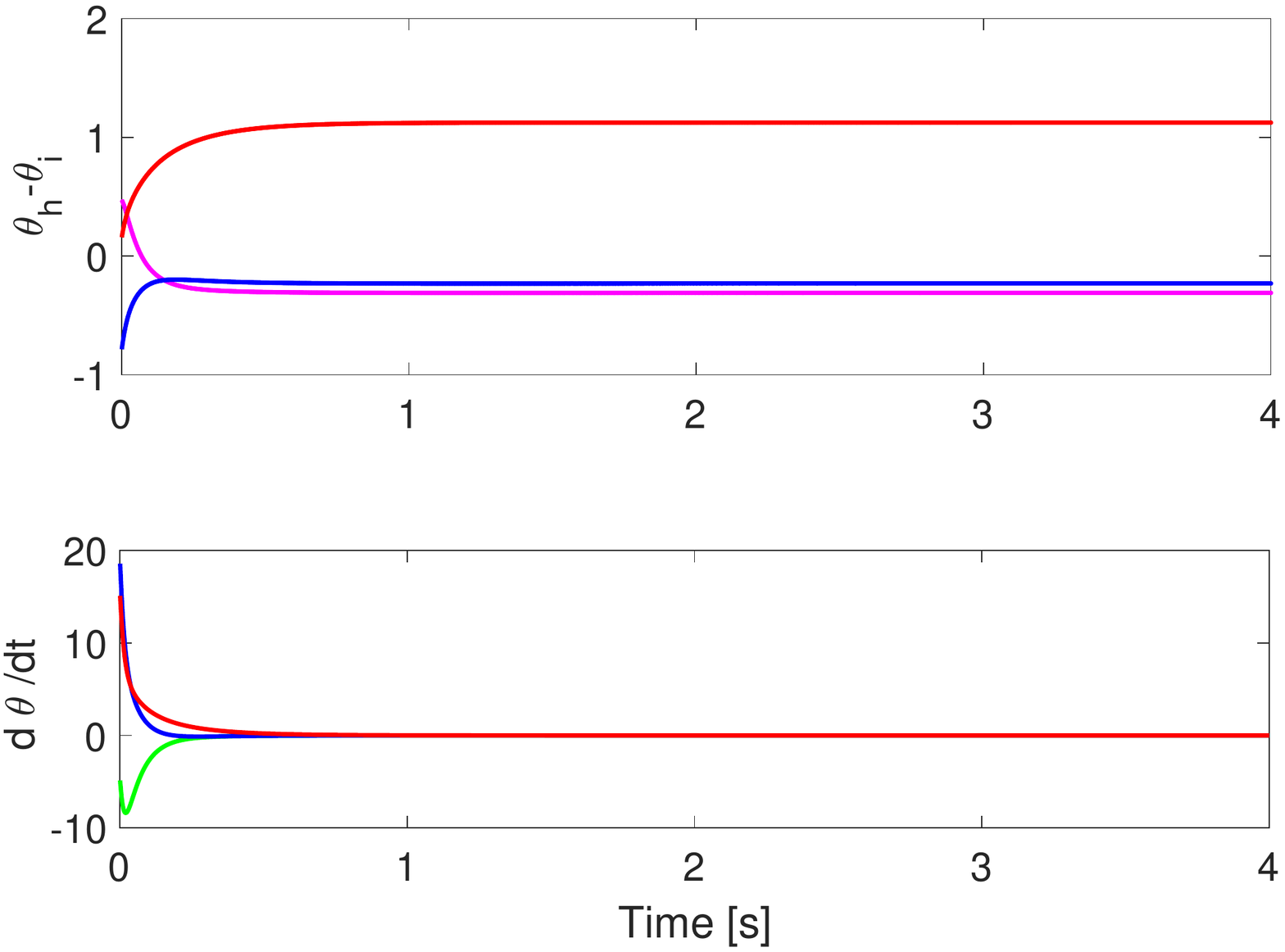}
\caption{A star graph with (1) frequency de-synchronization (left) and (2) event-triggered controller (right).}\label{sim3}
\end{figure}\\[5mm]
\section{Conclusions}\label{sec:con}
This paper has studied the synchronization of a finite number of Kuramoto oscillators in a tree network where the coupling strength of each link between every two oscillators in each direction is weighted by a common coefficient and the exogenous frequency of its corresponding head oscillator. We have driven a sufficient condition for the common coupling strength and showed its dependency on both exogenous frequencies and graph structure. We have also provided an example of the application of the obtained sufficient bound to achieve frequency synchronization in a star network for which the value of $\kappa$ was given. Future avenues include allowing zero and time-varying exogenous frequencies, and extending the event-triggered algorithm for a more general class of graphs.  
\bibliographystyle{plain}
\bibliography{biblio}
\end{document}